\definecolor{shadecolor}{rgb}{0.9,0.9,0.9}
\newtheorem{definition}{Definition}
\newtheorem{proposition}{Proposition}
\newtheorem{lemma}[proposition]{Lemma}
\newtheorem{theorem}[proposition]{Theorem}
\def\squareforqed{\hbox{\rlap{$\sqcap$}$\sqcup$}}
\def\qed{\ifmmode\squareforqed\else{\unskip\nobreak\hfil
\penalty50\hskip1em\null\nobreak\hfil\squareforqed
\parfillskip=0pt\finalhyphendemerits=0\endgraf}\fi}
\def\endenv{\ifmmode\;\else{\unskip\nobreak\hfil
\penalty50\hskip1em\null\nobreak\hfil\;
\parfillskip=0pt\finalhyphendemerits=0\endgraf}\fi}
\newenvironment{proof}{\noindent \textbf{{Proof~} }}{\hfill $\blacksquare$}
\newcounter{remark}
\newcounter{example}
\mathchardef\ordinarycolon\mathcode`\:
\def\vcentcolon{\mathrel{\mathop\ordinarycolon}}
\newmdenv[skipabove=7pt,
skipbelow=7pt,
backgroundcolor=darkblue!15,
innerleftmargin=5pt,
innerrightmargin=5pt,
innertopmargin=5pt,
leftmargin=0cm,
rightmargin=0cm,
innerbottommargin=5pt,
linewidth=1pt]{tBox}
\newmdenv[skipabove=7pt,
skipbelow=7pt,
backgroundcolor=darkred!15,
innerleftmargin=5pt,
innerrightmargin=5pt,
innertopmargin=5pt,
leftmargin=0cm,
rightmargin=0cm,
innerbottommargin=5pt,
linewidth=1pt]{rBox}
\newmdenv[skipabove=7pt,
skipbelow=7pt,
backgroundcolor=blue2!25,
innerleftmargin=5pt,
innerrightmargin=5pt,
innertopmargin=5pt,
leftmargin=0cm,
rightmargin=0cm,
innerbottommargin=5pt,
linewidth=1pt]{dBox}
\newmdenv[skipabove=7pt,
skipbelow=7pt,
backgroundcolor=darkkblue!15,
innerleftmargin=5pt,
innerrightmargin=5pt,
innertopmargin=5pt,
leftmargin=0cm,
rightmargin=0cm,
innerbottommargin=5pt,
linewidth=1pt]{sBox}
\definecolor{darkblue}{RGB}{0,76,156}
\definecolor{darkkblue}{RGB}{0,0,153}
\definecolor{blue2}{RGB}{102,178,255}
\definecolor{darkred}{RGB}{195,0,0}
\newcommand{\nc}{\newcommand}
\nc{\rnc}{\renewcommand}
\nc{\lbar}[1]{\overline{#1}}
\nc{\bra}[1]{\langle#1|}
\nc{\ket}[1]{|#1\rangle}
\nc{\ketbra}[2]{|#1\rangle\!\langle#2|}
\nc{\braket}[2]{\langle#1|#2\rangle}
\nc{\proj}[1]{| #1\rangle\!\langle #1 |}
\nc{\avg}[1]{\langle#1\rangle}
\nc{\rank}{\operatorname{Rank}}
\nc{\smfrac}[2]{\mbox{$\frac{#1}{#2}$}}
\nc{\tr}{\operatorname{Tr}}
\nc{\ox}{\otimes}
\nc{\dg}{\dagger}
\nc{\dn}{\downarrow}
\nc{\cA}{{\cal A}}
\nc{\cB}{{\cal B}}
\nc{\cC}{{\cal C}}
\nc{\cD}{{\cal D}}
\nc{\cE}{{\cal E}}
\nc{\cF}{{\cal F}}
\nc{\cG}{{\cal G}}
\nc{\cH}{{\cal H}}
\nc{\cI}{{\cal I}}
\nc{\cJ}{{\cal J}}
\nc{\cK}{{\cal K}}
\nc{\cL}{{\cal L}}
\nc{\cM}{{\cal M}}
\nc{\cN}{{\cal N}}
\nc{\cO}{{\cal O}}
\nc{\cP}{{\cal P}}
\nc{\cQ}{{\cal Q}}
\nc{\cR}{{\cal R}}
\nc{\cS}{{\cal S}}
\nc{\cT}{{\cal T}}
\nc{\cU}{{\cal U}}
\nc{\cV}{{\cal V}}
\nc{\cX}{{\cal X}}
\nc{\cY}{{\cal Y}}
\nc{\cZ}{{\cal Z}}
\nc{\cW}{{\cal W}}
\nc{\csupp}{{\operatorname{csupp}}}
\nc{\qsupp}{{\operatorname{qsupp}}}
\nc{\var}{{\operatorname{var}}}
\nc{\rar}{\rightarrow}
\nc{\lrar}{\longrightarrow}
\nc{\polylog}{{\operatorname{polylog}}}
\nc{\wt}{{\operatorname{wt}}}
\nc{\av}[1]{{\left\langle {#1} \right\rangle}}
\nc{\supp}{{\operatorname{supp}}}
\nc{\argmin}{{\operatorname{argmin}}}
\def\x{\xi}
\nc{\RR}{{{\mathbb R}}}
\nc{\CC}{{{\mathbb C}}}
\nc{\FF}{{{\mathbb F}}}
\nc{\NN}{{{\mathbb N}}}
\nc{\ZZ}{{{\mathbb Z}}}
\nc{\PP}{{{\mathbb P}}}
\nc{\QQ}{{{\mathbb Q}}}
\nc{\UU}{{{\mathbb U}}}
\nc{\EE}{{{\mathbb E}}}
\nc{\id}{{\operatorname{id}}}
\nc{\CHSH}{{\operatorname{CHSH}}}
\nc{\be}{\begin{equation}}
\nc{\ee}{{\end{equation}}}
\nc{\bea}{\begin{eqnarray}}
\nc{\eea}{\end{eqnarray}}
\nc{\rU}{\mbox{U}}
\nc{\ob}[1]{#1}
\nc{\OLOCC}{{\text{1-LOCC}}}
\nc{\SEP}{{\text{SEP}}}
\nc{\NS}{{\text{NS}}}
\nc{\LOCC}{{\text{LOCC}}}
\nc{\PPT}{{\text{PPT}}}
\nc{\EXT}{{\text{EXT}}}
\nc{\Sym}{{\operatorname{Sym}}}
\nc{\ERLO}{{E_{{ R,LO}}}}
\nc{\ERLOCC}{{E_{{R,\text{PPT}}}}}
\nc{\ERPPT}{{E_{{R,\text{PPT}}}}}
\nc{\ERPPTinf}{{E^{\infty}_{{R,\text{PPT}}}}}
\nc{\ER}{E_{\rm R}}
\nc{\ERLOCCinfty}{{E^{\infty}_{{r,LOCC}}}}
\nc{\Aram}{{\operatorname{\sf A}}}
\nc{\ECPPT}{{E_{{C,\text{PPT}}}}}
\nc{\EDPPT}{{E_{{D,\text{PPT}}}}}
\nc{\Freek}{{\text{PPT$_k$}}}
\nc{\Freesec}{{\text{PPT$_2$}}}
\nc{\NB}{{{{\tiny N}}}}
\nc{\LB}{{{LN}}}
\nc{\NPT}{{\text{NPT}}}
\def\grd@save@target#1{%
  \def\grd@target{#1}}
\def\grd@save@start#1{%
  \def\grd@start{#1}}
\tikzset{
  grid with coordinates/.style={
    to path={%
      \pgfextra{%
        \edef\grd@@target{(\tikztotarget)}%
        \tikz@scan@one@point\grd@save@target\grd@@target\relax
        \edef\grd@@start{(\tikztostart)}%
        \tikz@scan@one@point\grd@save@start\grd@@start\relax
        \draw[minor help lines,magenta] (\tikztostart) grid (\tikztotarget);
        \draw[major help lines] (\tikztostart) grid (\tikztotarget);
        \grd@start
        \pgfmathsetmacro{\grd@xa}{\the\pgf@x/1cm}
        \pgfmathsetmacro{\grd@ya}{\the\pgf@y/1cm}
        \grd@target
        \pgfmathsetmacro{\grd@xb}{\the\pgf@x/1cm}
        \pgfmathsetmacro{\grd@yb}{\the\pgf@y/1cm}
        \pgfmathsetmacro{\grd@xc}{\grd@xa + \pgfkeysvalueof{/tikz/grid with coordinates/major step}}
        \pgfmathsetmacro{\grd@yc}{\grd@ya + \pgfkeysvalueof{/tikz/grid with coordinates/major step}}
        \foreach \x in {\grd@xa,\grd@xc,...,\grd@xb}
        \node[anchor=north] at (\x,\grd@ya) {\pgfmathprintnumber{\x}};
        \foreach \y in {\grd@ya,\grd@yc,...,\grd@yb}
        \node[anchor=east] at (\grd@xa,\y) {\pgfmathprintnumber{\y}};
      }
    }
  },
  minor help lines/.style={
    help lines,
    step=\pgfkeysvalueof{/tikz/grid with coordinates/minor step}
  },
  major help lines/.style={
    help lines,
    line width=\pgfkeysvalueof{/tikz/grid with coordinates/major line width},
    step=\pgfkeysvalueof{/tikz/grid with coordinates/major step}
  },
  grid with coordinates/.cd,
  minor step/.initial=.2,
  major step/.initial=1,
  major line width/.initial=2pt,
}
\def\problem@s{}
\newcounter{problems@cnt}
\newcommand{\allproblems}{\problem@s}
\definecolor{colortwo}{rgb}{0.4,0.77,0.17}
\definecolor{colorthree}{rgb}{0.01,0.51,0.93}
\begin{document}
\title{Optimal unilocal virtual quantum broadcasting}
\author{Hongshun Yao}
\thanks{H. Yao and X. Liu contributed equally to this work.}
\author{Xia Liu}
\thanks{H. Yao and X. Liu contributed equally to this work.}
\author{Chengkai Zhu}
\author{Xin Wang}
\email{felixxinwang@hkust-gz.edu.cn}
\affiliation{Thrust of Artificial Intelligence, Information Hub,\\
Hong Kong University of Science and Technology (Guangzhou), Guangdong 511453, China}

\begin{abstract}
Quantum broadcasting is central to quantum information processing and characterizes the correlations within quantum states. Nonetheless, traditional quantum broadcasting encounters inherent limitations dictated by the principles of quantum mechanics. In a previous study, Parzygnat \textit{et al.} \href{https://journals.aps.org/prl/abstract/10.1103/PhysRevLett.132.110203}{[Phys. Rev. Lett. \textbf{132}, 110203 (2024)]} introduced a canonical broadcasting quantum map that goes beyond the quantum no-broadcasting theorem through a virtual process. In this work, we generalize the concept of virtual broadcasting to unilocal broadcasting by incorporating a reference system and introduce protocols that can be approximated using physical operations with minimal cost. First, we propose a universal unilocal protocol enabling multiple parties to share the correlations of a target bipartite state, which is encoded in the expectation value for any observable. Second, we formalize the simulation cost of a virtual quantum broadcasting protocol into a semidefinite programming problem. Notably, we propose a specific protocol with optimal simulation cost for the 2-broadcasting scenario, revealing an explicit relationship between simulation cost and the quantum system's dimension. Moreover, we establish upper and lower bounds on the simulation cost of the virtual $n$-broadcasting protocol and demonstrate the convergence of the lower bound to the upper bound as the quantum system's dimension increases.
\end{abstract}

\date{\today}
\maketitle

%%%%%%%%%%%%%%%%%%%%%%%%%%%%%%%%%%%%%%%%%%%%%%%%%%%%%%%%%%%%%%%%%%%%%%%%%%%
%%%%%%%%%%%%%%%%%%%%%%%%%%%%%%%%%%%%%%%%%%%%%%%%%%%%%%%%%%%%%%%%%%%%%%%%%%%
\section{Introduction}
In classical information processing, creating duplicates is a straightforward task. However, the quantum realm presents a challenge due to the no-cloning theorem~\cite{Wootters1982,Dieks1982}, rendering direct copies impossible. 
Quantum broadcasting~\cite{Piani2015a,Xie2017}, a concept milder than quantum cloning, offers a distinct perspective on the classical-quantum interface. 
Unfortunately, there are also fundamental restrictions on quantum broadcasting~\cite{Barnum1996}. The no-broadcasting theorem states that it is only possible to broadcast a set of quantum states if they commute with each other. In other words, if the quantum states have properties that can be simultaneously measured without disturbing each other, it is possible to broadcast them.

These no-go theorems can be further extended to the setting of local broadcasting for composite quantum systems \cite{piani2008no,Luo2010,Piani2016a}. Given a bipartite quantum state $\rho_{A B}$ shared by Alice and Bob, the local-broadcasting aims to perform local operations $\Lambda_{A \rightarrow A_1 A_2}$ and $\Gamma_{B \rightarrow B_1 B_2}$ to produce a state $\widehat{\rho}_{A_1 A_2 B_1 B_2}:=\left(\Lambda_{A \rightarrow A_1 A_2} \ox \Gamma_{B \rightarrow B_1 B_2}\right) \rho_{A B}$ such that $\tr_{A_1 B_1} [\widehat{\rho}_{A_1 A_2 B_1 B_2}]=\tr_{A_2 B_2}[ \widehat{\rho}_{A_1 A_2 B_1 B_2}]=\rho_{A B}$. Furthermore, unilocal broadcasting is considered when the local operations are only allowed for one party, e.g., Bob. It is shown that the unilocal broadcasting can be done if and only if $\rho_{A B}$ is classical on $B$~\cite{piani2008no,Piani2016a,luo2010quantum,Luo2010}. More generally, a unilocal $n$-broadcasting performs the local operation $\Gamma_{B\rightarrow B_1\cdots B_n}$ to produce the state $\widehat{\rho}_{AB_1\cdots B_n}:=\Gamma_{B\rightarrow B_1\cdots B_n}(\rho_{AB})$ such that $\tr_{\backslash AB_1}[\widehat{\rho}_{AB_1\cdots B_n}]=\cdots=\tr_{\backslash AB_n}[\widehat{\rho}_{AB_1\cdots B_n}]=\rho_{AB}$~\cite{Xie2017}, which shown in Fig.~\ref{fig:unilocal}.

Although any physical process cannot achieve quantum broadcasting due to these no-go theorems, Parzygnat et al.~\cite{parzygnat2024virtual} presented a canonical broadcasting quantum map going beyond the quantum no-broadcasting theorem via a virtual process, which focuses on broadcasting measurement statistics of a target state rather than the state itself. They presented three natural conditions that virtual broadcasting maps should satisfy and provided several physical interpretations, such as that the universal quantum cloner is the optimal physical approximation to their canonical broadcasting map. However, when considering using physical operators to approximate the non-physical process with minimal sampling cost, the optimal protocols for
virtual broadcasting and the more general unilocal broadcasting are unknown. To overcome the above challenges as well as the limitations of the quantum no-local-broadcasting theorem, we investigate \textit{unilocal virtual quantum broadcasting}, which aims to broadcast the correlation of a bipartite quantum system encoded in the expectation values of any possible observables.

Virtual processes concern the classical information discerned after measurement, referred to as \textit{shadow information}~\cite{huang2020predicting,li2021vsql} that we mainly focus on in the majority of quantum information and quantum computing tasks, rather than the whole information of a state. Therefore, for a bipartite quantum state $\rho_{AB}$, we concentrate on a specific broadcasting task that the local operations employed by Bob enable $n$ parties $B_1,\cdots, B_n$ to access the same shadow information $\tr[O\rho_{AB}]$ with respect to any observable $O$. It is worth noting that we are not focusing on distributing the expectation value as classical bits to different parties. In fact, Alice and Bob are considered geographically separated laboratories where a global expectation value cannot be obtained in the first place. Instead, our framework works by supplying Alice and Bob with multiple identical copies of the states which ensures that each bipartite party $AB_j$ can access the correlation of $AB$ by sharing the same expectation value.

Technically, we extend the traditional quantum broadcasting by employing Hermitian preserving and trace-preserving (HPTP) maps, which can be physically implemented by quasiprobability decomposition (QPD)~\cite{Buscemi2013,Temme2017,Jiang2020,Piveteau2021} and measurement-controlled post-processing~\cite{Zhao2023}. Such physical simulation of unphysical maps plays a crucial role in applications such as entanglement detection~\cite{Peres1996,Horodecki1996,Guhne2009,Wang2020}, error mitigation~\cite{Temme2017,Piveteau2021,Jiang2020,Zhao2022,zhao2024retrieving}, and two-point correlator~\cite{Buscemi2013}. In specific, we may construct an HPTP map $\Gamma_{B\rightarrow B^n}$ and decompose it into a linear combination of local channels $\cN_j$ for Bob, i.e., $\Gamma_{B\rightarrow B^n}=\sum_{j=1}c_j\cN_j$, where $c_j$ are certain real numbers. One can estimate the shadow information by sampling quantum channels $\cN_j$ and post-processing the measurement outcomes~\cite{Zhao2022} for an observable $O$ and quantum state $\rho_{AB}$ (see Proposition~\ref{prop:UBP} for a precise statement). Then, it is essential to understand the power and limitations of such virtual quantum broadcasting as the following two questions arise:

\begin{enumerate}
    \item \emph{Is there a universal virtual quantum broadcasting protocol?}
    \item \emph{What is the optimal protocol with minimum sampling cost?}
\end{enumerate}

In this paper, we fully address these two questions. In Sec.~\ref{sec:existence}, we demonstrate the existence of a \textit{universal} unilocal virtual $n$-broadcasting protocol, for any bipartite quantum state $\rho_{AB}$ and observable $O$. In Sec.~\ref{sec:opt_vbroad}, we formalize the simulation cost of a unilocal virtual $n$-broadcasting into a semidefinite programming (SDP)~\cite{boyd2004convex}. Notably, we provide an analytical universal unilocal virtual $2$-broadcasting protocol to elucidate the optimal simulation cost.  In addition, we investigate the upper and lower bounds on the simulation cost of the unilocal virtual $n$-broadcasting protocol.

\begin{figure}[t]
    \centering
    \includegraphics[width=0.85\linewidth]{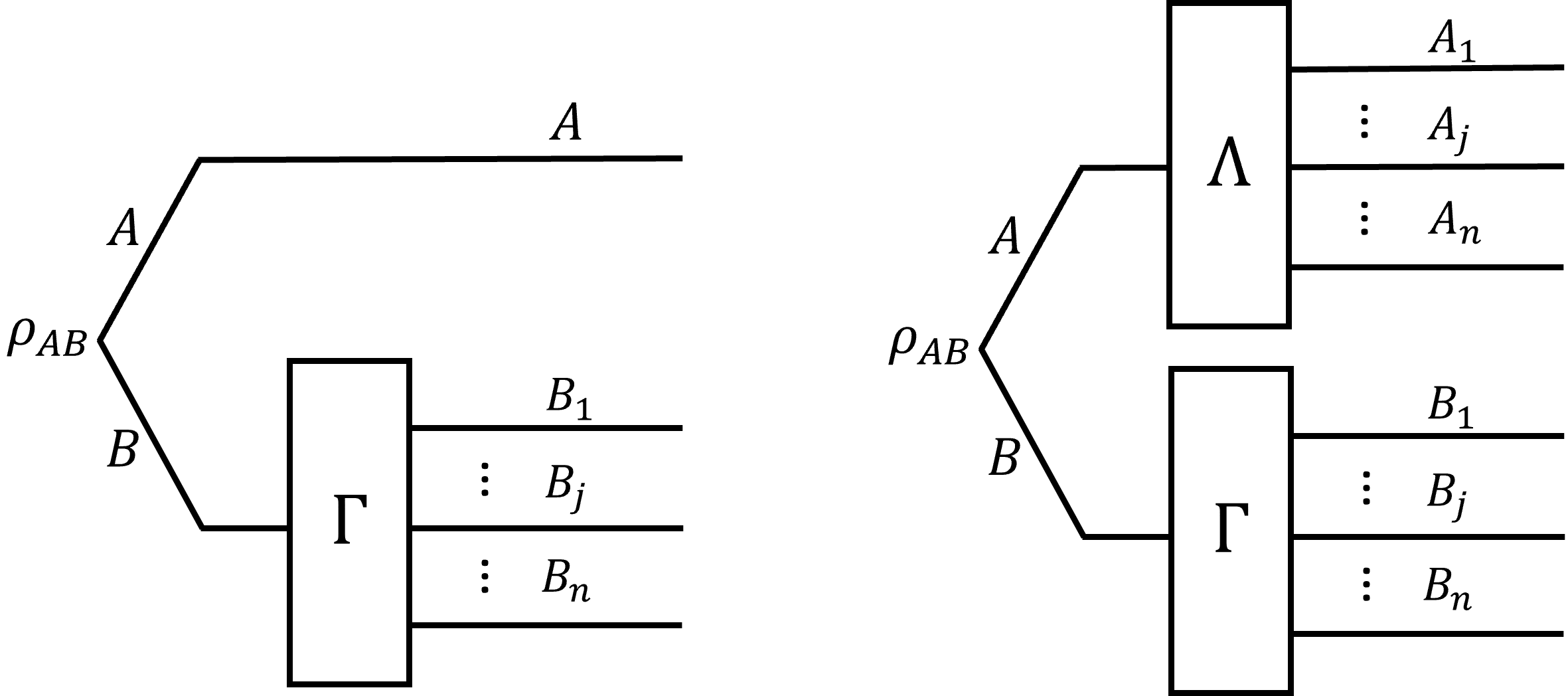}
    \caption{Unilocal(left) and bilocal(right) $n$-broadcasting for bipartite state $\rho_{AB}$. The goal is for the map $\Gamma$ to minimize the dissimilarity between the states on $\rho_{AB_j}$ and $\rho_{AB}$ in a certain measure. Conventionally, $\Gamma$ is a CPTP map, i.e., quantum channel. This paper focuses on the scenario where $\Gamma$ is an HPTP map.
    }
    \label{fig:unilocal}
\end{figure}

%%%%%%%%%%%%%%%%%%%%%%%%%%%%%%%%%%%%%%%%%%%%%%%%%%%%%%%%%%%%%%%%%%%%%%%%%%%%%%%%%%%%%%%%%%%%%%%%%%%%%%%%%%%%%%%%%%%%%%%%%%%%%%%%%%%%%%%%%%%%
\section{Universal virtual broadcasting protocol}\label{sec:existence}

We consider a finite-dimensional Hilbert space $\cH$ and denote $A$ and $B$ as two parties, each possessing their respective Hilbert spaces $\cH_A$ and $\cH_B$. We denote the dimension of  $\cH_B$ as $d$. Let $\{\ket{j} \}_{j=0,\cdots,d-1}$ be a standard computational basis. Denote $\cL(\cH_A)$ as the set of linear operators that map from $\cH_A$ to itself. A linear operator in $\cL(\cH_A)$ is called a density operator if it is positive semidefinite with trace one, and denotes $\cD(\cH_A)$ as the set of all density operators on $\cH_A$. We denote $F_{B_1B_2}:=\sum_{i,j=0}^{d-1}\ketbra{ij}{ji}$ as swap operator between subsystems $B_1$ and $B_2$, and denote $\Phi_{BB_1}:=\sum_{i,j=0}^{d-1} \ketbra{ii}{jj}_{BB_1}$ as the unnormalized $d\ox d$ maximally entangled state. In the absence of ambiguity, subsystems may be omitted, i.e., $\Phi_d$. A quantum channel $\cN_{A\to B}$ is a linear map from $\cL(\cH_A)$ to $\cL(\cH_B)$ that is completely positive and trace-preserving (CPTP). Its associated Choi-Jamiołkowski operator is expressed as $J^{\cN}_{AB} := \sum_{i, j=0}^{d-1}\ketbra{i}{j} \ox \cN_{A \to B}(\ketbra{i}{j})$.

\begin{figure*}[t]
    \centering
    \includegraphics[width=0.6\textwidth]{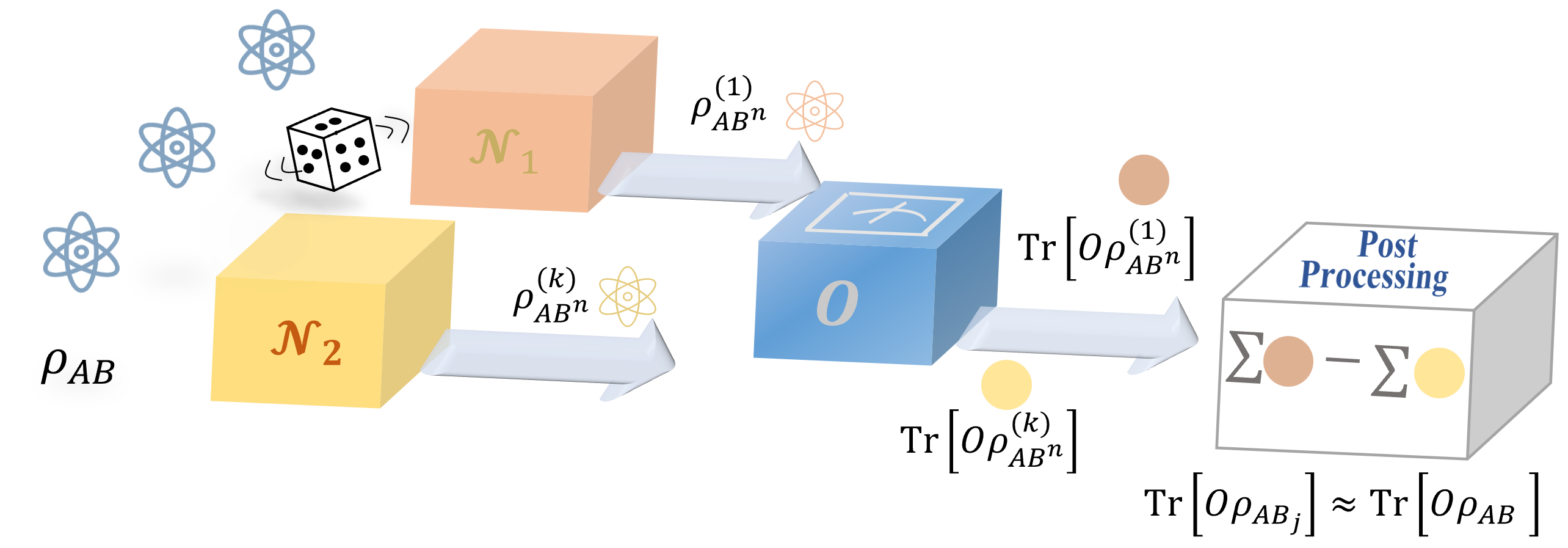}
    \caption{Illustration of using a universal virtual $n$-broadcasting $\Gamma_{B\to B^n}=p_1\cN_1-p_2\cN_2$ to share shadow information between different parties. For a given observable $O$ and many copies of a bipartite state $\rho_{AB}$, we sample local quantum channels $\cN_1$ and $\cN_2$ with probability $p_1/(p_1+p_2)$ and $p_2/(p_1+p_2)$ respectively. Iterating this procedure $m$ times, we can obtain $\rho_{AB^n}^{(k)}$ for $k = 1,2,\cdots,m$. Afterwards, each party $AB_j$, where $j = 1,2,\cdots,n$, obtains $\tr[O\rho_{AB}]$ since $\tr[O\rho_{AB}]=\tr[O\tr_{\backslash AB_j}[\Gamma_{B\to B^n}(\rho_{AB})]]$.}
    \label{fig:VBP}
\end{figure*}

Formally, a unilocal virtual $n$-broadcasting protocol for a bipartite quantum state $\rho_{AB}$ is defined as follows.

\begin{definition}[Unilocal virtual $n$-broadcasting protocol]
For a bipartite state $\rho_{AB}\in\cD(\cH_A\otimes\cH_B)$, an HPTP map $\Gamma_{B\rightarrow B^n}$ is called a unilocal virtual $n$-broadcasting protocol for $\rho_{AB}$ if 
\begin{align}
    \rho_{AB} =\tr_{\setminus AB_j}[\Gamma_{B\rightarrow B^n}(\rho_{AB})], \quad \forall j=1,2,\cdots,n,
\end{align}
where identity map is omitted, $\tr_{\setminus AB_j}$ denotes taking partial trace on the subsystems excluding $AB_j$, and $B^n$ is the abbreviation of the subsystems $B_1B_2\cdots B_n$.
\end{definition}

We note that if there is a unilocal virtual $n$-broadcasting protocol $\Gamma_{B\rightarrow B^n}$ for all quantum states $\rho_{AB}\in\cD(\cH_A\ox \cH_B)$, we call it \textit{a universal unilocal virtual $n$-broadcasting protocol}. Equivalently, a universal unilocal virtual $n$-broadcasting protocol $\Gamma_{B\rightarrow B_1\cdots B_n}$ can be characterized by its Choi operator $J^{\Gamma}_{BB^n}$ as the following Lemma.

\begin{lemma}\label{lem:state to channel}
An HPTP map $\Gamma_{B\rightarrow B^n}$ is a universal unilocal virtual $n$-broadcasting protocol if and only if 
\begin{align}
    J_{BB_j}^{\Gamma}=\Phi_{BB_j}, \quad j=1,\cdots, n,
\end{align}
where $\Phi_{BB_j}$ denotes the unnormalized $d\ox d$ maximally entangled state on system $BB_j$, $J^{\Gamma}_{BB_j} := \tr_{\setminus BB_j}[J^\Gamma_{BB^n}]$, and $J^\Gamma_{BB^n}$ is the Choi operator of $\Gamma_{B\rightarrow B^n}$.
\end{lemma}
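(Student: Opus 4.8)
The plan is to pass from the statement about states to a statement about the associated marginal maps and then invoke the Choi--Jamio\l{}kowski isomorphism. For each $j$, introduce the marginal map $\Gamma^{(j)}_{B\to B_j}:=\tr_{\setminus B_j}\circ\,\Gamma_{B\to B^n}$, which is again HPTP since the partial trace is CPTP. A direct computation starting from $J^{\Gamma}_{BB^n}=\sum_{i,k}\ketbra{i}{k}_B\ox\Gamma_{B\to B^n}(\ketbra{i}{k})$ shows that $\tr_{\setminus BB_j}[J^{\Gamma}_{BB^n}]=\sum_{i,k}\ketbra{i}{k}_B\ox\Gamma^{(j)}(\ketbra{i}{k})$, i.e. the operator $J^{\Gamma}_{BB_j}$ appearing in the lemma is exactly the Choi operator of $\Gamma^{(j)}$. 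Also observe that $\Phi_{BB_j}=\sum_{i,k}\ketbra{ii}{kk}$ is precisely the Choi operator of the identity channel $\id_B$ under the identification $\cH_{B_j}\cong\cH_B$. Hence the claimed equivalence reduces to: the defining property of a universal unilocal virtual $n$-broadcasting protocol holds iff $\Gamma^{(j)}=\id_B$ for every $j$.

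For the forward implication, assume $\Gamma_{B\to B^n}$ is universal. Since $\id_A$ commutes with tracing out the $B$-subsystems, linearity gives $\tr_{\setminus AB_j}[(\id_A\ox\Gamma_{B\to B^n})(\rho_{AB})]=(\id_A\ox\Gamma^{(j)})(\rho_{AB})$, so the hypothesis reads $(\id_A\ox\Gamma^{(j)})(\rho_{AB})=\rho_{AB}$ for all $\rho_{AB}\in\cD(\cH_A\ox\cH_B)$. Because density operators span $\cL(\cH_A\ox\cH_B)$, this forces $\id_A\ox\Gamma^{(j)}=\id_{AB}$ as linear maps; evaluating on $\sigma_A\ox X$ for any fixed nonzero $\sigma_A$ and arbitrary $X\in\cL(\cH_B)$ yields $\Gamma^{(j)}(X)=X$, i.e. $\Gamma^{(j)}=\id_B$. (Equivalently, one may take $\cH_A$ of dimension $d$, feed in the normalized maximally entangled state, and read off $J^{\Gamma^{(j)}}_{BB_j}=\Phi_{BB_j}$ directly from Choi--Jamio\l{}kowski.) Combining with the identification of $J^{\Gamma}_{BB_j}$ from the first paragraph gives $J^{\Gamma}_{BB_j}=\Phi_{BB_j}$.

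For the converse, assume $J^{\Gamma}_{BB_j}=\Phi_{BB_j}$ for all $j$. By the first paragraph this says the Choi operator of $\Gamma^{(j)}$ equals that of $\id_B$, and since the Choi--Jamio\l{}kowski correspondence is a bijection between linear maps and operators, $\Gamma^{(j)}=\id_B$. Then for any reference space $\cH_A$ and any $\rho_{AB}$, $\tr_{\setminus AB_j}[(\id_A\ox\Gamma_{B\to B^n})(\rho_{AB})]=(\id_A\ox\Gamma^{(j)})(\rho_{AB})=\rho_{AB}$, so $\Gamma_{B\to B^n}$ is a universal unilocal virtual $n$-broadcasting protocol.

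I do not expect a genuine obstacle: the one point needing a little care is the forward step ``universality over all states $\Rightarrow$ marginal map is the identity,'' which requires the spanning argument (or the maximally-entangled-input trick) rather than plugging in a single state; everything else is bookkeeping with partial traces and the Choi isomorphism. A secondary point to keep straight is that all maps in play are HPTP rather than CPTP, but no positivity subtlety arises since the partial trace and $\id_B$ are CPTP, so the HPTP property is simply preserved throughout.
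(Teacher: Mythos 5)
Your proof is correct and follows essentially the same route as the paper: both reduce the claim to the observation that $J^{\Gamma}_{BB_j}$ is the Choi operator of the induced marginal map $\tr_{\setminus B_j}\circ\,\Gamma_{B\to B^n}$ and then identify the universality condition with this operator being the Choi operator of the identity channel $\Phi_{BB_j}$. You are in fact more explicit than the paper about the one step that needs care --- that agreement with the identity on all density operators forces the marginal map itself to be the identity, via the spanning (or maximally entangled input) argument --- which the paper's ``only if'' direction asserts without comment.
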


Lemma~\ref{lem:state to channel} states that a universal unilocal virtual $n$-broadcasting protocol can be described by its Choi operator, which means we can check constraints on Choi operators instead of constraints involving input and output states. The proof can be found in the Appendix. One of the remarkable and valuable findings in this paper is that there indeed exists a universal virtual $n$-broadcasting protocol. As a warm-up example, we present a universal unilocal virtual $2$-broadcasting protocol as follows:
\begin{equation*}
\begin{aligned}
    \Gamma_{B\rightarrow B_1B_2}(\rho_{AB}) := \rho_{AB_1}\ox \frac{I_{B_2}}{d} &+\cS_{B_1B_2}(\rho_{AB_1}\ox \frac{I_{B_2}}{d})\\
    &-\cR_{B\rightarrow B_1B_2}(\rho_{AB}),
\end{aligned}
\end{equation*}
where $\mathcal{S}_{B_1B_2}(\cdot)$ denotes the swap operation between the subsystem $B_1$ and $B_2$, $\cR_{B\rightarrow B_1B_2}(\cdot)$ denotes the replacement channel yielding the normalized $d\ox d$ maximally entangled state between subsystem $B_1$ and $B_2$ for any input state. Its Choi operator can be written as
\begin{align*}
    J_{BB_1B_2}^{\Gamma_{B\rightarrow B_1B_2}}:=\frac{1}{d}\Phi_{BB_1}\ox I_{B_2}+\frac{1}{d}\Phi_{BB_2}\ox I_{B_1}-\frac{1}{d}\Phi_{B_1B_2}\ox I_{B}.
\end{align*}
It is straightforward to check that $J^{\Gamma_{BB_1B_2}}_{BB_1}=J^{\Gamma_{BB_1B_2}}_{BB_2}=\Phi_{BB_1}=\Phi_{BB_2}$. Consequently, $\Gamma_{B\rightarrow B_1B_2}$ is a universal unilocal virtual $2$-broadcasting protocol by Lemma~\ref{lem:state to channel}. Furthermore, we extend our investigation to encompass the realm of $n$-broadcasting, where we demonstrate the existence of a universal unilocal virtual $n$-broadcasting protocol as follows.

\begin{proposition}\label{prop:UBP}
For any bipartite quantum system $AB$, there exists a universal unilocal virtual $n$-broadcasting protocol.
\end{proposition}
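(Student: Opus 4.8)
The plan is to build the protocol explicitly through its Choi operator and then invoke Lemma~\ref{lem:state to channel}, generalizing the $n=2$ warm-up. Concretely, I would propose the inclusion--exclusion ansatz
\begin{align*}
J_{BB^n} := &\sum_{i=1}^{n}\Phi_{BB_i}\ox\frac{I_{\setminus BB_i}}{d^{n-1}} \\
& -\frac{2}{n}\sum_{1\le i<k\le n}\Phi_{B_iB_k}\ox\frac{I_{\setminus B_iB_k}}{d^{n-1}},
\end{align*}
where $I_{\setminus S}$ denotes the identity on the subsystems of $\{B,B_1,\dots,B_n\}$ not lying in the set $S$. Setting $n=2$ recovers exactly the Choi operator displayed just before the proposition, which is the sanity check that guides the guess. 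The coefficient $2/n$ is forced by a simple overshoot count: the first sum will produce each required marginal $\Phi_{BB_j}$ plus a spurious $(n-1)\,I_{BB_j}/d$, while each of the $\binom{n}{2}$ pairwise correction terms is engineered so that its $BB_j$-marginal is exactly $I_{BB_j}/d$ regardless of whether $j\in\{i,k\}$, so one needs $\tfrac{2}{n}\binom{n}{2}=n-1$.

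Then I would carry out three verifications. First, $J_{BB^n}$ is manifestly Hermitian (since $\Phi_d^\dg=\Phi_d$, the identities are Hermitian, and the coefficients are real), so by the Choi--Jamio\l{}kowski correspondence the associated map $\Gamma_{B\to B^n}$ is Hermitian-preserving. Second, the trace-preserving condition $\tr_{B^n}[J_{BB^n}]=I_B$: using $\tr_{B_i}[\Phi_{BB_i}]=I_B$ and $\tr[\Phi_d]=d$, one checks that each of the $n$ first-sum terms contributes $I_B$ and each of the $\binom{n}{2}$ second-sum terms also contributes $I_B$, giving $n\,I_B-\tfrac{2}{n}\binom{n}{2}I_B=I_B$. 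Hence $J_{BB^n}$ is the Choi operator of an HPTP map $\Gamma_{B\to B^n}$. Third, the marginal conditions: fixing $j$, the $i=j$ term of the first sum yields $\Phi_{BB_j}$ and each of the $n-1$ terms with $i\neq j$ yields $I_{BB_j}/d$, so the first sum gives $\Phi_{BB_j}+(n-1)I_{BB_j}/d$; and since $\binom{n}{2}=\binom{n-1}{2}+(n-1)$ splits the pairs into those avoiding $j$ and those containing $j$, each contributing $I_{BB_j}/d$, the second sum gives $-(n-1)I_{BB_j}/d$. Adding the two yields $\tr_{\setminus BB_j}[J_{BB^n}]=\Phi_{BB_j}$, and Lemma~\ref{lem:state to channel} then finishes the proof.

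The only real obstacle is the bookkeeping in the marginal computation: keeping the powers of $d$ straight as varying numbers of identity factors get traced out, and confirming that the $BB_j$-marginal of a pairwise term $\Phi_{B_iB_k}\ox(\cdots)$ equals $I_{BB_j}/d$ in \emph{both} cases $j\in\{i,k\}$ (trace out $B_k$, using $\tr_{B_k}[\Phi_{B_jB_k}]=I_{B_j}$) and $j\notin\{i,k\}$ (trace out $B_iB_k$, using $\tr[\Phi_d]=d$). Conceptually there is a single genuine choice to get right — the ansatz itself, in particular the coefficient $2/n$ — and once that is extracted from the $n=2$ case together with the overshoot heuristic, the rest is routine. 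An equivalent but slightly messier alternative would be to write $\Gamma_{B\to B^n}$ directly at the level of states, as $\Gamma(\rho_{AB})=\sum_i \rho_{AB_i}\ox\tfrac{I_{\setminus B_i}}{d^{n-1}}-\tfrac{2}{n}\sum_{i<k}\rho_A\ox\tfrac{\Phi_{B_iB_k}}{d}\ox\tfrac{I_{\setminus BB_iB_k}}{d^{n-2}}$, and check the defining trace conditions directly; but routing everything through the Choi operator and Lemma~\ref{lem:state to channel} is cleaner and matches the presentation already used for $n=2$.
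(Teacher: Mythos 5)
Your construction is correct and follows essentially the same route as the paper: an explicit Choi operator of the form "$\sum_j \Phi_{BB_j}\ox I/d^{n-1}$ minus a correction of total weight $n-1$ whose $BB_j$-marginal is $I_{BB_j}/d$", verified through Lemma~\ref{lem:state to channel}. The only (cosmetic) difference is that you symmetrize the correction over all $\binom{n}{2}$ pairs with weight $2/n$ each, whereas the paper puts the full weight $n-1$ on the single term $\Phi_{B_1B_2}\ox I_{BB_3\cdots B_n}/d^{n-1}$ (a replacement channel); both yield the same marginals and the same conclusion.
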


We demonstrate Proposition~\ref{prop:UBP} by explicitly constructing an HPTP map $\Gamma^\prime_{B\rightarrow B^n}$ as follows.
\begin{equation}
\begin{aligned}
    \Gamma^\prime_{B\rightarrow B^n}(\rho_{AB}):=\sum_{j=1}^n \cS_{B_1B_j} & (\rho_{AB_1}\ox \frac{I_{B_2\cdots B_n}}{d^{n-1}})\\
    &-(n-1)\cR_{B\rightarrow B^n}(\rho_{AB}),
\end{aligned}
\end{equation}
where $\mathcal{S}_{B_1B_j}(\cdot)$ denotes the swap operation between the subsystems $B_1$ and $B_j$, and $\cR_{B\rightarrow B^n}(\cdot)$ denotes the replacement channel yielding $\Phi_{B_1B_2}\ox \frac{I_{B_3\cdots B_n}}{d^{n-3}}$ for any input state. By checking its Choi operator and applying Lemma~\ref{lem:state to channel}, we know it is a universal unilocal $n$-broadcasting protocol.

Such a universal unilocal virtual $n$-broadcasting protocol can be implemented via a quasiprobability decomposition strategy~\cite{Jiang2020,Zhao2022,zhao2024retrieving} as shown in Fig.~\ref{fig:VBP}. Given an observable $O$ and $M$ copies of a bipartite state $\rho_{AB}$ shared between Alice and Bob, a unilocal virtual $n$-broadcasting protocol can be decomposed as
$\Gamma_{B\rightarrow B^n} = p_1\cN_1 - p_2\cN_2$, where $\cN_1$ and $\cN_2$ are quantum channels ~\cite{Jiang2020}. 
In $m$-th round of sampling, Bob samples a quantum channel $\cN^{(m)}\in\{\cN_1, \cN_2\}$ with probability $p^{(m)}\in \{p_1/\gamma, p_2/\gamma\}$ where $\gamma= p_1 + p_2$. Then apply the channel to $\rho_{AB}$ obtaining $\rho_{AB^n}^{(m)}$. Repeat this process $M$ times to obtain $M$ copies of state $\{\rho_{AB^n}^{(1)}, \rho_{AB^n}^{(2)}, \cdots, \rho_{AB^n}^{(M)}\}$. Without loss of generality, if a global measurement is performed on a computational basis on $AB_j$, i.e., $O = \sum_{k}\lambda_k \ketbra{k}{k}, \lambda_k\in[-1,1]$, we construct
\begin{equation}\label{Eq:est_trOrho}
    \xi := \frac{\eta}{M}\sum_{m=1}^M\operatorname{sgn}(p^{(m)})\lambda^{(m)}
\end{equation}
as an estimator of $\tr[O\rho_{AB}]$. Subsequently, each bipartite system $AB_j$, where $j = 1,2,\cdots,n$, acquires the information of $\tr[O\rho_{AB}]$ by measuring their subsystems in the eigenbasis of $O$ and post-processing~\cite{Zhao2022}.

Note that for any bipartite system $AB_j$, there are $M$ copies of quantum states $\rho_{AB_j}^{(m)},\, m=1,2,...,M$, each of which is labeled by a classical bit $\mathrm{sgn}(p^{(m)})$ where $p^{(m)} = p_1 \text{ or } p_2$. Instead of directly transmitting the expectation value, $AB_j$ can further apply any further quantum operations to these samples and access the same correlation as $AB$ through measurement statistics.
This universal unilocal virtual $n$-broadcasting protocol can apply to any state $\rho_{AB}$ and any observable $O$. However, it is impossible when one considers using one channel to deal with this task. Consequently, we may extend the no-go theorem for local broadcasting by involving HPTP maps in the broadcasting procedure.

%%%%%%%%%%%%%%%%%%%%%%%%%%%%%%%%%%%%%%%%%%%%%%%%%%%%%%%%%%%%%%%%%%%%%%%%%%%%%%%%%%%%%%%%%%%%%%%%%%%%%%%%%%%%%%%%%%%%%%%%%%%%%%%%%%%%%%%%%%%%%%%%%%%%%%%%%%%%%%%%%%%%%%%%%%%%%%%
\section{Optimal virtual broadcasting}\label{sec:opt_vbroad}

In this section, we explore the universal unilocal virtual $n$-broadcasting protocol which can be simulated by physical operations with minimum costs. Treating the unilocal virtual $n$-broadcasting protocol as a general HPTP map, its simulation or sampling cost can be characterized via the following physical implementability, which plays the key role in quantifying the number of rounds required to reach the desired estimating precision~\cite{Jiang2020}.

\begin{definition}[Simulation cost of an HPTP map~\cite{Jiang2020}]\label{def:cost_of_hptp}
The simulation cost (or physical implementability) of an HPTP map $\Gamma$ is defined as
\begin{equation}
\begin{aligned}
    \nu(\Gamma):=\log\min\Big\{p_1+&p_2\big| \, \Gamma = p_1\mathcal{N}_1-p_2\mathcal{N}_2,\\
    \ & p_1,p_2\geq 0, \cN_1,\cN_2\in\text{CPTP} \Big\},
\end{aligned}
\end{equation}
where logarithms are in base $2$ throughout this paper.
\end{definition}

By Hoeffding's inequality, denoting $\gamma=p_1+p_2$, it requires at least $\cO(\frac{\gamma^2}{\delta^2}\ln{\frac{2}{\epsilon}})$ samples of $\rho_{AB}$ to achieve the estimation error $\delta$ with a probability $1-\epsilon$, for estimating $\tr(O\rho_{AB})$ by estimator $\xi$ in Eq.~\eqref{Eq:est_trOrho}. Based on the above, we define the optimal simulation cost of a universal unilocal virtual $n$-broadcasting protocol as follows.

\begin{definition}[Optimal simulation cost]\label{def:optimal simulation cost}
    The optimal simulation cost of all universal unilocal virtual $n$-broadcasting protocols is defined as
    \begin{align}
        \gamma^{\ast}_n:=\min\{\nu(\Gamma_{B\to B^n}):\Gamma_{B\to B^n}\in\mathcal{T}_n\},
    \end{align}
    where $\mathcal{T}_n$ denotes the set of all universal unilocal virtual $n$-broadcasting protocols. The corresponding protocol $\Gamma^{\ast}_{B\to B^n}:=\argmin \{\nu(\Gamma_{B\to B^n}):\Gamma_{B\to B^n}\in\mathcal{T}_n\}$ is the optimal universal $n$-broadcasting protocols.
\end{definition}
Combined with the properties that a universal virtual broadcasting should satisfy as stated in Lemma~\ref{lem:state to channel}, the optimal simulation cost can be formalized as follows.

\begin{proposition}\label{prop:SDP-universal}
The optimal simulation cost of all universal unilocal virtual $n$-broadcasting protocols can be characterized as the following SDP:
\begin{equation}\label{eq:primal_without_rho}
\begin{aligned}
    2^{\gamma^{\ast}_n} =  \min\;&p_1+p_2\\
    {\rm s.t.}\; &\tr_{\backslash BB_j}[J^{\cN_{1}}_{BB^n}-J^{\cN_{2}}_{BB^n}]=\Phi_{BB_j},\,j=1,\cdots,n\\
    &\tr_{B^n}[J^{\cN_{1}}_{BB^n}]=p_1I_{B},\\
    % \vspace{3ex}
    &\tr_{B^n}[J^{\cN_{2}}_{BB^n}]=p_2I_{B},\\
    &J^{\cN_{1}}_{BB^n}\geq 0, J^{\cN_{2}}_{BB^n}\geq 0,
\end{aligned}
\end{equation}
with variables $J^{\cN_{1}}_{BB^n}$, $J^{\cN_{2}}_{BB^n}$, $p_1$ and $p_2$. $\Phi_{BB_j}$ is the unnormalized $d\ox d$ maximally entangled state on system $BB_j$.
\end{proposition}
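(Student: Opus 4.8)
The plan is to unfold the nested minimization defining $\gamma^{\ast}_n$ and then translate the condition ``$\Gamma_{B\to B^n}\in\mathcal{T}_n$'' into the Choi-operator language of Lemma~\ref{lem:state to channel}. Combining Definition~\ref{def:optimal simulation cost} with Definition~\ref{def:cost_of_hptp}, and using that $\log$ is monotone so that the two nested minimizations collapse into one, minimizing $\nu(\Gamma)$ over $\Gamma\in\mathcal{T}_n$ while simultaneously minimizing over the decompositions $\Gamma=p_1\cN_1-p_2\cN_2$ is the same as minimizing $p_1+p_2$ directly over all pairs of CPTP maps $\cN_1,\cN_2$ and weights $p_1,p_2\geq 0$ whose combination lies in $\mathcal{T}_n$:
\begin{equation*}
2^{\gamma^{\ast}_n}=\min\big\{p_1+p_2 \,\big|\, p_1\cN_1-p_2\cN_2\in\mathcal{T}_n,\ \cN_1,\cN_2\in\text{CPTP},\ p_1,p_2\geq 0\big\}.
\end{equation*}

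Next I would rewrite the membership condition via Choi operators. Since the Choi map and the partial trace are linear, $J^{p_1\cN_1-p_2\cN_2}_{BB^n}=p_1 J^{\cN_1}_{BB^n}-p_2 J^{\cN_2}_{BB^n}$, so by Lemma~\ref{lem:state to channel} the requirement $p_1\cN_1-p_2\cN_2\in\mathcal{T}_n$ is equivalent to $\tr_{\backslash BB_j}\big[p_1 J^{\cN_1}_{BB^n}-p_2 J^{\cN_2}_{BB^n}\big]=\Phi_{BB_j}$ for all $j=1,\dots,n$. It is then natural to work with the sub-normalized Choi operators $\widetilde{J}_i:=p_i J^{\cN_i}_{BB^n}$: the pair of conditions ``$p_i\geq 0$ and $\cN_i\in\text{CPTP}$'' is equivalent to ``$\widetilde{J}_i\geq 0$ and $\tr_{B^n}[\widetilde{J}_i]=p_i I_B$'' (with $p_i$ recovered as $\tr[\widetilde{J}_i]/d$), the degenerate case $p_i=0$ corresponding to $\widetilde{J}_i=0$, for which any CPTP channel serves as $\cN_i$. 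Substituting these variables and relabeling $\widetilde{J}_i$ as $J^{\cN_i}_{BB^n}$ (the mild abuse of notation used in the statement, where the trace constraint reads $p_i I_B$ rather than $I_B$) turns the optimization into exactly the program of Eq.~\eqref{eq:primal_without_rho}; the objective and the equality constraints are affine in $(J^{\cN_1}_{BB^n},J^{\cN_2}_{BB^n},p_1,p_2)$ while the remaining constraints are positive-semidefiniteness, so it is indeed an SDP.

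To close the argument I would verify that this SDP attains its value as a genuine minimum: its feasible region is nonempty because the explicit protocol $\Gamma^\prime_{B\to B^n}$ of Proposition~\ref{prop:UBP} yields a feasible point, and along any minimizing sequence the quantities $p_1+p_2$ — hence $p_1,p_2$ and, being positive semidefinite with bounded trace, the operators $J^{\cN_i}_{BB^n}$ — stay bounded, so a compactness argument extracts an optimal feasible tuple. I expect no genuinely hard step here; the only point requiring care is checking that the change of variables between $(\cN_1,\cN_2,p_1,p_2)$ and $(\widetilde{J}_1,\widetilde{J}_2)$ is a bijection onto the SDP's feasible set, in particular handling the boundary case $p_i=0$, which is routine.
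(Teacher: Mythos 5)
Your proposal is correct and follows essentially the same route as the paper's proof: unfold Definitions~\ref{def:cost_of_hptp} and~\ref{def:optimal simulation cost} into a single minimization over decompositions $p_1\cN_1-p_2\cN_2$, invoke Lemma~\ref{lem:state to channel} to express membership in $\mathcal{T}_n$ as the Choi-marginal conditions, and pass to the rescaled variables $p_iJ^{\cN_i}_{BB^n}$. Your additional care about attainment of the minimum and the boundary case $p_i=0$ goes slightly beyond what the paper writes, but does not change the argument.
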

\begin{proof}
     Let $\mathcal{T}_n$ be the set of all universal unilocal virtual $n$-broadcasting protocols, and $J^\Gamma_{BB^n}$ be the Choi operator of $\Gamma_{B\to B^n}\in\mathcal{T}_n$.
     By definition of the simulation cost given in Definition~\ref{def:cost_of_hptp}, there exist $p_1,p_2\geq0$ and $\cN_1,\cN_2\in\text{CPTP}$ such that 
     \begin{equation}
         \begin{aligned}
             \nu(\Gamma_{BB^n})=\log(p_1+p_2).
         \end{aligned}
     \end{equation}
     The Choi operator of $\Gamma_{BB^n}$ satisfies $J^\Gamma_{BB^n}=p_1\hat{J}^{\cN_1}_{BB^n}-p_2\hat{J}^{\cN_2}_{BB^n}$, where $\hat{J}^{\cN_1}_{BB^n}$ and $\hat{J}^{\cN_2}_{BB^n}$ denote the Choi operators of $\cN_1$ and $\cN_2$, respectively. We further rewrite $J^{\cN_1}_{BB^n}:=p_1\hat{J}^{\cN_1}_{BB^n}$ and $J^{\cN_2}_{BB^n}:=p_2\hat{J}^{\cN_2}_{BB^n}$ for simplifying this optimization problem. According to Lemma~\ref{lem:state to channel} and Def.~\ref{def:optimal simulation cost}, one can obtain the SDP in Eq.~\eqref{eq:primal_without_rho}, which completes this proof.
\end{proof}

We further present its dual SDP as follows:  
\begin{equation}\label{eq:dual_without_rho}
\begin{aligned}
\max\;& \sum_{j=1}^n\tr[X_{BB_j}\Phi_{B{B}_j}]\\
        {\rm s.t.}\;& \tr[Z_{B}]\leq 1, \tr[K_{B}]\leq 1,\\
        & Z_{B}\ox I_{B^n} - \sum_{j=1}^n \cS_{B_1B_j}(X_{BB_1}\ox I_{B_2\cdots B_n})\geq 0,\\
        & K_{B}\ox I_{B^n} + \sum_{j=1}^n \cS_{B_1B_j}(X_{BB_1}\ox I_{B_2\cdots B_n})\geq 0,\\
        & (j=1,\cdots,n),
\end{aligned}
\end{equation}
where $X_{BB_j}$, $Z_B$ and $K_B$ are optimization variables, and $\cS_{B_1B_j}$ denotes the swap operator between system $B$ and $B_j$. We retain the derivation in Appendix~\ref{appendix:universal_dual_sdp}. 

The above SDPs allow us to explore the optimal universal virtual broadcasting protocols that can achieve the optimal simulation cost.  
Specifically, we give the analytical optimal simulation cost for a universal unilocal virtual $2$-broadcasting and obtain the optimal universal protocol.

\begin{theorem}
[Optimal simulation cost of virtual $2$-broadcasting]
\label{thm:cost_2broadcast_qudit}
The optimal simulation cost of all universal unilocal virtual 2-broadcasting protocols which broadcast system $B$ to $B_1B_2$ is given by
\begin{align}
    \gamma^{\ast}_2=\log\left(3-\frac{4}{d+1}\right),
\end{align}
where $d$ denotes the dimension of Hilbert spaces $\cH_{B}$, $\cH_{B_1}$ and $\cH_{B_2}$.
\end{theorem}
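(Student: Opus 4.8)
The plan is to compute the optimal value of the SDP in Eq.~\eqref{eq:primal_without_rho} for $n=2$ by exhibiting a matching primal feasible point and dual feasible point. For the primal upper bound I would start from the warm-up protocol $\Gamma_{B\to B_1B_2}$ given just before Proposition~\ref{prop:UBP}, but first note that it is almost certainly \emph{not} optimal: its natural channel decomposition has cost $3$, independent of $d$, whereas the claimed answer $3-\tfrac{4}{d+1}$ is strictly smaller. So the real work is to find a better protocol. Since the SDP is invariant under the symmetry group generated by (i) swapping $B_1\leftrightarrow B_2$, (ii) the $U\ox \bar U\ox \bar U$ twirl on $BB_1B_2$ coming from universality (the constraints $\tr_{\backslash BB_j}[\cdot]=\Phi_{BB_j}$ are covariant under $U_B\ox\bar U_{B_j}$), one may assume the optimal $J^{\cN_1}_{BB_1B_2}$ and $J^{\cN_2}_{BB_1B_2}$ lie in the commutant of this group. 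That commutant on three qudits (one conjugated, two not, with the two symmetrized) is low-dimensional — spanned by a handful of operators built from $I$, the swaps $F_{B_1B_2}$, the maximally entangled projectors $\Phi_{BB_1},\Phi_{BB_2}$, and products thereof. Writing $J^{\cN_1},J^{\cN_2}$ as linear combinations of these basis operators reduces Eq.~\eqref{eq:primal_without_rho} to a small linear program in the expansion coefficients, whose optimum can be read off by hand and gives $p_1+p_2 = 3-\tfrac{4}{d+1}$.

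For the lower bound I would use weak duality with the dual SDP in Eq.~\eqref{eq:dual_without_rho}: it suffices to produce a single dual-feasible triple $(X_{BB_j},Z_B,K_B)$ — by symmetry take $X_{BB_1}=X_{BB_2}=X$ and $Z_B=K_B=\tfrac{1}{d}I_B$ — achieving objective value $3-\tfrac{4}{d+1}$. The natural ansatz is $X = \alpha\,\Phi_{BB_j} + \beta\, I_{BB_j}$ for scalars $\alpha,\beta$ to be chosen; then $\sum_{j}\cS_{B_1B_j}(X_{BB_1}\ox I_{B_2\cdots})$ is again a symmetric operator in the small commutant algebra, its eigenvalues are explicit low-degree rational functions of $d$, and the two positivity constraints $Z_B\ox I \mp \sum_j \cS(\cdots)\geq 0$ become a couple of scalar inequalities pinning down $(\alpha,\beta)$; the objective $\sum_j\tr[X\Phi_{BB_j}] = 2(\alpha d^2 + \beta d)$ then evaluates to the claimed bound. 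Matching this with the primal value certifies optimality, and the primal optimizer yields the explicit optimal protocol the theorem alludes to.

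The main obstacle is the bookkeeping in the representation-theoretic reduction: correctly identifying the commutant algebra, choosing a convenient operator basis (I would use $\{I, F_{B_1B_2}, \Phi_{BB_1}, \Phi_{BB_2}, \Phi_{BB_1}F_{B_1B_2}, \ldots\}$ and reduce it using identities like $\Phi_{BB_1}F_{B_1B_2}=\Phi_{BB_2}F_{B_1B_2}$, $F^2=I$, $\tr_{B_1}\Phi_{BB_1}=I_B$), and tracking the partial traces and products so that the linear constraints come out right. There is also a small subtlety in verifying that the restriction to the symmetric subalgebra is without loss of generality — this follows by averaging any feasible pair $(J^{\cN_1},J^{\cN_2})$ over the (compact) symmetry group, using that the group action preserves complete positivity, the linear constraints, and the objective $p_1+p_2=\tfrac1d\tr[J^{\cN_1}]+\tfrac1d\tr[J^{\cN_2}]$. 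Once the algebra is set up, both the primal LP and the dual feasibility check are short explicit computations, and equality of the two values gives the theorem.
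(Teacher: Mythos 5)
Your proposal is correct and follows essentially the same route as the paper: both halves come down to exhibiting matching feasible points for the primal and dual SDPs of Proposition~\ref{prop:SDP-universal}, and your dual ansatz $X=\alpha\Phi+\beta I$, $Z_B=K_B=\tfrac{1}{d}I$ is exactly the paper's certificate $X_{BB_1}=\tfrac{2}{d(d+1)}\Phi_d-\tfrac{1}{2d}I$. The only difference is presentational: where you derive the primal optimizer by twirling into the (correctly identified) commutant algebra spanned by $I$, $F_{B_1B_2}$, $\Phi_{BB_1}+\Phi_{BB_2}$ and their products, the paper simply writes down the resulting operators $J^{\cN_1},J^{\cN_2}$ in that same algebra and verifies positivity via the identities $(J^{\cN_1})^2=J^{\cN_1}$ and $(J^{\cN_2})^2=\tfrac{1}{d^2-2}J^{\cN_2}$.
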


\begin{proof}
First, we are going to prove $2^{\gamma^{\ast}_2} \leq \frac{3d-1}{d+1}$ using the primal SDP in Eq.~\eqref{eq:primal_without_rho}. Denoting $M=\Phi_{BB_1}\ox I_{B_2}$ and $N=I_{B}\ox F_{B_1B_2}$, we shall show that $\{p_1,p_2,J^{\cN_1},J^{\cN_2}\}$ is a feasible solution, where $p_1=\frac{2d}{d+1}$, $p_2=\frac{d-1}{d+1}$, and
\begin{equation}\label{eq:cptp_choi}
\begin{aligned}
    &J^{\cN_1}:=\frac{M+NMN+MN+NM}{2(d+1)},\\
    &J^{\cN_2}:=\frac{1}{d^2-2}\left(I-\frac{d(M+NMN)-(MN+NM)}{d^2-1}\right),
\end{aligned}
\end{equation}
respectively. It is straightforward to check the equality constraints in Eq.~\eqref{eq:primal_without_rho} hold. For the inequality constraints, we find that $(J^{\cN_1})^2=J^{\cN_1}$ and $(J^{\cN_2})^2=\frac{1}{d^2-2}J^{\cN_2}$. Thus $1$ and $\frac{1}{d^2-2}$ are unique non-negative eigenvalues of $J^{\cN_1}$ and $J^{\cN_2}$, respectively, which means $J^{\cN_1}\geq 0$ and $J^{\cN_2}\geq 0$. Therefore, $\{p_1J^{\cN_1}, p_2J^{\cN_2}\}$ is a feasible solution with the cost of $\frac{3d-1}{d+1}$, which implies $2^{\gamma^{\ast}_2} \leq \frac{3d-1}{d+1}$. 

Second, we use dual SDP in Eq.~\eqref{eq:dual_without_rho} to show that $2^{\gamma^{\ast}_2} \geq \frac{3d-1}{d+1}$. We show that $\{X_{BB_1},Y_{BB_2},Z_B,K_B\}$ is a feasible solution, where 
\begin{equation}
    Z_B=K_B=\frac{1}{d}I \text{ and } X_{BB_1} = Y_{BB_2} = \frac{2}{d(d+1)}\Phi_d - \frac{1}{2d}I.
\end{equation}
Still, we can check that $\{X_{BB_1},Y_{BB_2},Z_B,K_B\}$ satisfies the constraints SDP in Eq.~\eqref{eq:dual_without_rho}. Specifically, we have $I-\frac{M+NMN}{d+1}\geq0$ since $\frac{M+NMN}{d+1}$ is a Hermitian matrix with maximal eigenvalue one~\cite{leditzky2017useful}. Therefore, $\{X_{BB_1},Y_{BB_2},Z_B,K_B\}$ is a feasible solution. Finally, we further check the objective function,
\begin{align}
    \tr[X_{BB_1}\Phi_{BB_1}]+\tr[Y_{BB_2}\Phi_{BB_2}] = 3-\frac{4}{d+1},
\end{align}
which yields $2^{\gamma^{\ast}_2}\geq \frac{3d-1}{d+1}$. Combining the primal part and the dual part, we conclude that 
\begin{align}
    \gamma^{\ast}_2=\log\left(3-\frac{4}{d+1}\right),
\end{align}
which completes this proof.
\end{proof}

\begin{proposition}[Optimal universal $2$-broadcasting protocol]\label{prop:opt-2}
The optimal universal $2$-broadcasting protocol is given by $\Gamma^\ast_{B\to B_1B_2}=p_1\cN_1-p_2\cN_2$, where $p_1=\frac{2d}{d+1}$, $p_2=\frac{d-1}{d+1}$, and
    \begin{equation}
    \begin{aligned}
        \cN_1(\rho_{AB}):=&\frac{d}{d+1}\cP(\rho_{AB})+\frac{1}{d+1}\mathcal{Q}(\rho_{AB}),\\
        \cN_2(\rho_{AB}):=&\frac{d^2}{d^2-2}\mathcal{I}(\rho_{AB})-\frac{2d^2}{(d^2-2)(d^2-1)}\cP(\rho_{AB})\\
        &\quad\quad+\frac{2}{(d^2-2)(d^2-1)}\mathcal{Q}(\rho_{AB}), 
    \end{aligned}
    \end{equation}
   where $d$ denotes the dimension of Hilbert spaces $\cH_{B}$, $\cH_{B_1}$ and $\cH_{B_2}$, $\mathcal{Q}(\rho_{AB}):=\frac{1}{2}[F_{B_1B_2}(\rho_{AB_1}\otimes I_{B_2})+(\rho_{AB_1}\otimes I_{B_2})F_{B_1B_2}]$, $\cP(\rho_{AB}):=\frac{1}{2d}[\rho_{AB_1}\otimes I_{B_2}+\cS_{B_1B_2}(\rho_{AB_1}\otimes I_{B_2})]$, 
   $\cS_{B_1B_2}(\cdot)$ is the swap operation between $B_1$ and $B_2$ corresponding to the swap operator $F_{B_1B_2}:=\sum_{i,j=1}^{d-1}\ketbra{ij}{ji}$, and $\mathcal{I}(\cdot)$ denotes the replacement channel yielding $\frac{1}{d^2}I_{B_1B_2}$.
\end{proposition}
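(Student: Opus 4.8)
The plan is to recognize that Proposition~\ref{prop:opt-2} is essentially the channel-language restatement of the primal-feasible point already constructed in the proof of Theorem~\ref{thm:cost_2broadcast_qudit}: the scalars $p_1,p_2$ are the same, and the maps $\cN_1,\cN_2$ are defined precisely so that their Choi operators coincide with $J^{\cN_1},J^{\cN_2}$ of Eq.~\eqref{eq:cptp_choi}. Accordingly, I would split the proof into three parts: (i)~identify the Choi operators of the building blocks $\cP,\cQ,\cI$ and verify the match; (ii)~deduce that $\cN_1,\cN_2$ are genuine quantum channels; and (iii)~invoke Lemma~\ref{lem:state to channel} together with Theorem~\ref{thm:cost_2broadcast_qudit} to conclude that $\Gamma^\ast_{B\to B_1B_2}=p_1\cN_1-p_2\cN_2$ is a universal protocol attaining the optimal cost $\gamma^\ast_2$.

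For step~(i), keeping the notation $M=\Phi_{BB_1}\ox I_{B_2}$ and $N=I_B\ox F_{B_1B_2}$ of the theorem's proof, the replacement channel $\cI$ has Choi operator $I_{BB_1B_2}/d^2$; the map $\cP$, which sends $\rho_B$ to $\tfrac{1}{2}\big(\rho_{B_1}\ox\tfrac{I_{B_2}}{d}+\cS_{B_1B_2}(\rho_{B_1}\ox\tfrac{I_{B_2}}{d})\big)$ (adjoin a maximally mixed state, then symmetrize over the swap), has Choi operator $\tfrac{1}{2d}(M+NMN)$; and, since left- and right-multiplying the output of the relabeling map $\rho_B\mapsto\rho_{B_1}\ox I_{B_2}$ (Choi operator $M$) by $F_{B_1B_2}$ produces the operators $NM$ and $MN$ respectively, $\cQ$ has Choi operator $\tfrac{1}{2}(NM+MN)$. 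Substituting these into the affine combinations defining $\cN_1$ and $\cN_2$ and collecting coefficients reproduces exactly $J^{\cN_1}=\tfrac{M+NMN+MN+NM}{2(d+1)}$ and $J^{\cN_2}=\tfrac{1}{d^2-2}\big(I-\tfrac{d(M+NMN)-(MN+NM)}{d^2-1}\big)$.

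For step~(ii), the only non-obvious requirement is complete positivity, since $\cQ$ is not completely positive and $\cN_2$ carries a negative coefficient on $\cP$; but the proof of Theorem~\ref{thm:cost_2broadcast_qudit} already establishes $(J^{\cN_1})^2=J^{\cN_1}$ and $(J^{\cN_2})^2=\tfrac{1}{d^2-2}J^{\cN_2}$, so both Choi operators are positive semidefinite. Trace preservation follows from $\tr_{B^n}[J^{\cN_i}]=I_B$, read off from the primal equality constraints $\tr_{B^n}[p_iJ^{\cN_i}]=p_iI_B$ in Eq.~\eqref{eq:primal_without_rho}. Hence $\cN_1,\cN_2\in\CPTP$, and since $p_1-p_2=1$, the difference $\Gamma^\ast=p_1\cN_1-p_2\cN_2$ is an HPTP map.

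For step~(iii), the Choi operator of $\Gamma^\ast$ is $p_1J^{\cN_1}-p_2J^{\cN_2}$, which by the same primal equality constraints satisfies $\tr_{\backslash BB_j}[p_1J^{\cN_1}-p_2J^{\cN_2}]=\Phi_{BB_j}$ for $j=1,2$; Lemma~\ref{lem:state to channel} then shows $\Gamma^\ast$ is a universal unilocal virtual $2$-broadcasting protocol. The exhibited decomposition with $p_1+p_2=\tfrac{3d-1}{d+1}$ gives $\nu(\Gamma^\ast)\le\log\tfrac{3d-1}{d+1}=\gamma^\ast_2$, while $\gamma^\ast_2=\min_{\Gamma\in\cT_2}\nu(\Gamma)\le\nu(\Gamma^\ast)$, so equality holds and $\Gamma^\ast$ is optimal. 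The only genuine obstacle is the Choi-operator bookkeeping in step~(i): tracking the normalization convention (channel Choi operator versus $p_i$-rescaled SDP variable) and correctly handling the non-Hermitian pieces $NM,MN$ entering $\cN_1$ through $\cQ$, so that complete positivity of $\cN_1$ becomes visible only after the full combination is formed.
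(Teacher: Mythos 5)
Your proposal is correct and follows essentially the same route as the paper: the paper's proof of Proposition~\ref{prop:opt-2} simply points back to the primal-feasible pair $\{p_1 J^{\cN_1}, p_2 J^{\cN_2}\}$ from the proof of Theorem~\ref{thm:cost_2broadcast_qudit} and invokes Definition~\ref{def:optimal simulation cost}. Your step~(i) Choi-operator bookkeeping (checking that the $\cP,\cQ,\cI$ combinations reproduce Eq.~\eqref{eq:cptp_choi}) is correct and supplies a verification the paper leaves implicit, but it does not change the argument.
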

\begin{proof}
    We further denote $M:=\Phi_{BB_1}\ox I_{B_2}$ and $N:=I_{B}\ox F_{B_1B_2}$ for short. Based on the proof of Theorem~\ref{thm:cost_2broadcast_qudit}, one can find that there exists a virtual $2$-broadcasting protocol $\Gamma^\ast_{B\to B_1B_2}=p_1\cN_1-p_2\cN_2$ with the optimal simulation cost, where $p_1=\frac{2d}{d+1}$, $p_2=\frac{d-1}{d+1}$, $\cN_1$ and $\cN_2$ are quantum channels with Choi operators $J^{\cN_1}:=\frac{M+NMN+MN+NM}{2(d+1)}$ and $J^{\cN_2}:=\frac{1}{d^2-2}\left(I-\frac{d(M+NMN)-(MN+NM)}{d^2-1}\right)$, respectively. According to the statement of Definition~\ref{def:optimal simulation cost}, we can refer to $\Gamma^\ast_{B\to B_1B_2}$ as the optimal universal $2$-broadcasting protocol, which completes this proof.
\end{proof}

Theorem~\ref{thm:cost_2broadcast_qudit} proposes the optimal universal virtual 2-broadcasting protocol, taking into account the sampling cost required to broadcast the correlation inherent in the classical information $\tr[O\rho_{AB}]$ with a desired estimating precision. Note that what we obtained here is the minimum cost protocol among all possible universal unilocal virtual $2$-broadcasting protocols. We first find the HPTP protocol for the desired simulation cost and then utilize the dual SDP in Eq.~\eqref{eq:primal_without_rho} to establish the optimality of this protocol. Moreover, Theorem~\ref{thm:cost_2broadcast_qudit} reveals an intriguing relationship between the sampling cost and the system's dimension. As the dimension of the quantum system grows, the simulation cost for universal virtual 2-broadcasting converges to a constant value of $\log3$, which means that even in high-dimensional quantum systems, the simulation cost is still within a controllable range.

We further extend our investigation to the context of unilocal virtual $n$-broadcasting, to analyze the change in simulation cost in relation to the number of parties involved, i.e., from system $B$ to $B_1,\cdots, B_n$. In particular, we derive an upper bound and a lower bound for the simulation cost of universal virtual $n$-broadcasting.

\begin{theorem}[Upper and lower bounds]\label{thm:bound on n-broadcasting}
    The optimal simulation cost of all universal unilocal virtual n-broadcasting protocols which broadcast system $B$ to $B_1\cdots B_n$ satisfies
    \begin{align}
       \log\left(\frac{2nd}{n+d-1}-1\right) \leq \gamma^{\ast}_n\leq \log(2n-1),
    \end{align}
    where $d$ denotes the dimension of Hilbert spaces $\cH_{B}$ and $\cH_{B_j}$ for $j=1,\cdots, n$.
\end{theorem}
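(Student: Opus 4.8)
The plan is to attack the two bounds separately, reusing the two SDPs from Proposition~\ref{prop:SDP-universal} exactly as in the proof of Theorem~\ref{thm:cost_2broadcast_qudit}: a primal feasible point gives the upper bound, a dual feasible point gives the lower bound. For the upper bound $\gamma^\ast_n\le\log(2n-1)$, I would exhibit an explicit HPTP decomposition $\Gamma_{B\to B^n}=p_1\cN_1-p_2\cN_2$ with $p_1+p_2=2n-1$, the natural candidate being built from the universal protocol $\Gamma'_{B\to B^n}$ already constructed for Proposition~\ref{prop:UBP}. Recall $\Gamma'_{B\to B^n}(\rho_{AB})=\sum_{j=1}^n\cS_{B_1B_j}(\rho_{AB_1}\otimes I_{B_2\cdots B_n}/d^{n-1})-(n-1)\cR_{B\to B^n}(\rho_{AB})$; grouping the $n$ swap-type terms into a positive (CPTP up to normalization) part with total weight $n$ and the $(n-1)$ copies of the replacement channel into the negative part with weight $n-1$ gives $p_1+p_2=n+(n-1)=2n-1$. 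The main things to verify here are that each of the two pieces is genuinely proportional to a quantum channel — i.e. completely positive (each $\cS_{B_1B_j}(\rho_{AB_1}\otimes I/d^{n-1})$ is manifestly a CP map, and a convex combination of them is too, so $\cN_1$ is CPTP) and trace-normalized — and that the Choi operators satisfy $\tr_{B^n}[J^{\cN_i}]=p_iI_B$, which is immediate from the replacement/swap structure.

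For the lower bound I would work with the dual SDP~\eqref{eq:dual_without_rho} and, guided by the $n=2$ case where $X_{BB_j}=\frac{2}{d(d+1)}\Phi_d-\frac{1}{2d}I$, try the symmetric ansatz $X_{BB_j}=a\Phi_{BB_j}+bI_{BB_j}$ for all $j$, with $Z_B=K_B=\frac1dI_B$ (or a suitable multiple of $I_B$ saturating $\tr Z_B\le1$). The objective becomes $\sum_j\tr[X_{BB_j}\Phi_{BB_j}]=n\,\tr[(a\Phi_d+bI)\Phi_d]=n(ad^2+bd^2)$ — more precisely $n(a d^2 + b d)$ using $\tr[\Phi_d^2]=d^2$ and $\tr[\Phi_d]=d$ — so I want to push $a,b$ as large as the two operator inequalities allow. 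The positivity constraint $Z_B\otimes I_{B^n}-\sum_j\cS_{B_1B_j}(X_{BB_1}\otimes I_{B_2\cdots B_n})\ge0$ is the binding one: with $X_{BB_1}=a\Phi_{BB_1}+bI$, the sum $\sum_j\cS_{B_1B_j}(\Phi_{BB_1}\otimes I)$ is a sum of maximally-entangled projectors (times $d$) on the pairs $(B,B_j)$, whose largest eigenvalue I need to bound. This is exactly the kind of spectral computation that the paper's cited fact~\cite{leditzky2017useful} handles in the $n=2$ case ("$\tfrac{M+NMN}{d+1}$ has maximal eigenvalue one"), and I expect the general bound to produce the denominator $n+d-1$: the operator $\sum_{j=1}^n \Phi_{BB_j}$ (acting on the symmetric-type subspace) has top eigenvalue $n+d-1$, which is the standard eigenvalue of a "sum of overlapping rank-one-ish projectors" and is what makes $\frac{2nd}{n+d-1}-1$ appear. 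Optimizing $a,b$ against this eigenvalue and against the normalization of $Z_B,K_B$ should yield the objective value $\frac{2nd}{n+d-1}-1$, hence $2^{\gamma^\ast_n}\ge\frac{2nd}{n+d-1}-1$.

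The order of operations: (i) write down the primal candidate from $\Gamma'$, check complete positivity and the two trace conditions, and read off $p_1+p_2=2n-1$; (ii) set up the symmetric dual ansatz, reduce the two LMIs to scalar inequalities on $a,b$ via the spectral analysis of $\sum_j\cS_{B_1B_j}(\Phi_{BB_1}\otimes I)$; (iii) optimize and evaluate the objective; (iv) combine. I expect step (ii) — specifically, computing the largest eigenvalue of $\sum_{j=1}^n\cS_{B_1B_j}(\Phi_{BB_1}\otimes I_{B_2\cdots B_n})$ as an operator on $(\CC^d)^{\otimes(n+1)}$ and verifying $K_B\otimes I+\sum_j(\cdots)\ge0$ simultaneously — to be the main obstacle, since unlike the $n=2$ case the relevant operators no longer commute pairwise in an obvious way and one must identify the right invariant subspace (essentially the span of $\ket{i}_B\otimes\ket{i}_{B_j}$-type vectors and the all-different sector) on which the extreme eigenvalue $n+d-1$ is attained. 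A convenient route is to note $\sum_j\cS_{B_1B_j}(\Phi_{BB_1}\otimes I)=\sum_j d\,\Pi_{BB_j}^{\mathrm{me}}$ up to identity factors, where $\Pi^{\mathrm{me}}$ is the maximally entangled projector, and to bound $\sum_j\Pi_{BB_j}^{\mathrm{me}}\preceq$ (number of terms sharing $B$) via a Gershgorin/rank argument; once that eigenvalue bound is in hand, the rest is bookkeeping analogous to the $n=2$ proof.
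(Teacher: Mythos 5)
Your proposal follows essentially the same route as the paper: the upper bound comes from splitting the universal protocol $\Gamma'_{B\to B^n}$ of Proposition~\ref{prop:UBP} as $n\cM_1-(n-1)\cM_2$ with the averaged swap-type channel and the replacement channel, and the lower bound comes from the dual feasible point $Z_B=K_B=\tfrac{1}{d}I$, $X_{BB_j}=\tfrac{2}{d(n+d-1)}\Phi_d-\tfrac{1}{nd}I$ (your optimized $a\Phi+bI$ ansatz lands exactly on these values), with feasibility hinging on the maximal eigenvalue $n+d-1$ of $\sum_{j}\cS_{B_1B_j}(\Phi_{BB_1}\otimes I)$. The only difference is that you flag and sketch the spectral verification that the paper dismisses as ``straightforward to check''; the argument is correct.
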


\begin{proof}
    We first show the upper bound on the minimum simulation cost. According to Proposition~\ref{prop:UBP}, one can find that the simulation cost of the universal protocol $\Gamma^\prime_{B\to B^n}$ can be an upper bound of $\gamma^{\ast}_n$, i.e., $\gamma^{\ast}_n\leq \nu(\Gamma^\prime_{B\to B^n})$. Then, rewrite the universal virtual $n$-broadcasting protocol $\Gamma^\prime_{B\rightarrow B^n}$ into the linear combination of two quantum channels $\cM_1$ and $\cM_2$ as
    \begin{align}
        \Gamma^{\prime}_{B\to B^n}=n\cM_1 - (n-1)\cM_2,
    \end{align}
    where the Choi operators of $\cM_1$ and $\cM_2$ can be written as
    $J^{\cM_1}:=\frac{1}{nd^{n-1}}\sum_{j=1}^n \cS_{B_1B_j}(\Phi_{BB_1}\ox I_{B_2\cdots B_n})$ and $J^{\cM_2}:=\frac{1}{d^{n-1}} \Phi_{B_1B_2}\ox I_{BB_3\cdots B_n}$, respectively. Then, by definition, we have $\nu(\Gamma^{\prime}_{B\to B^n})\leq \log(2n-1)$, which directly gives $\gamma^{\ast}_n\leq \log(2n-1)$.
    
    Second, we are going to derive the lower bound by showing that $\{X_{BB_1},\cdots,X_{BB_n},Z_{B},K_{B}\}$ is a feasible solution of the dual SDP in Eq.~\eqref{eq:primal_without_rho}, where $Z_{B}=K_{B}=\frac{I_B}{d}$, and $ X_{BB_1}=\cdots=X_{BB_n}=\frac{2}{d(n+d-1)}\Phi_d-\frac{1}{nd}I$.
It is straightforward to check that $\{X_{BB_1},\cdots,X_{BB_n},Z_{B},K_{B}\}$ satisfies the constrains of SDP in Eq.~\eqref{eq:primal_without_rho}. We further check the objective function
\begin{align}
    \sum_{j=1}^n\tr[X_{BB_j}\Phi_{BB_j}]=\frac{2nd}{n+d-1}-1.
\end{align}
According to the fact that the optimal solution of dual SDP in Eq.~\eqref{eq:primal_without_rho} is a lower bound of the optimal solution of primal SDP in Eq.~\eqref{eq:primal_without_rho}, we have the following inequality
\begin{align}
    \log\left(\frac{2nd}{n+d-1}-1\right)\leq \gamma^{\ast}_n,
\end{align}
which completes the proof.
\end{proof}

\begin{figure}[t]
    \centering
    \includegraphics[width=0.9\linewidth]{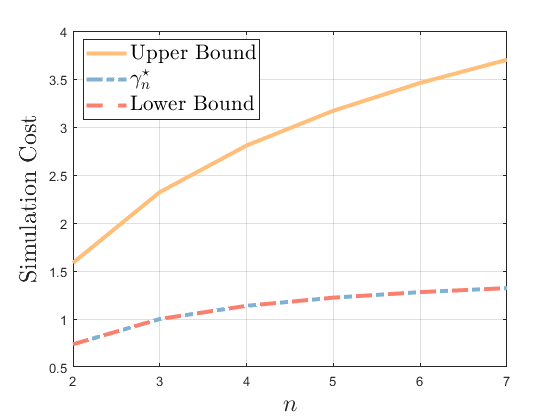}
    \caption{Simulation cost of universal unilocal virtual $n$-broadcasting. Here, the dimensions of the Hilbert spaces $\mathcal{H}_B$ and $\mathcal{H}_{B_j} (j=1,\cdots,n)$  are all equal to two. The $x$-axis corresponds to the number of parties on Bob's side involved in the broadcasting. The $y$-axis corresponds to the simulation cost of the protocol. 
    }
    \label{fig:up_low_cost}
\end{figure}

Remarkably, in Fig.~\ref{fig:up_low_cost}, one can find that the lower bound matches the optimal simulation cost $\gamma_{n}^\ast$ in numerical experiments. Furthermore, according to Theorem~\ref{thm:bound on n-broadcasting}, it is straightforward to find that the lower bound converges to the upper bound as the dimension of the quantum system grows. These mean the upper bound is significantly valuable at a high system level. The simulation cost will not exhibit exponential growth with the dimension of the system, which suggests our capability to effectively tackle the unilocal virtual $n$-broadcasting task, even for a bipartite system with a high dimension. In summary, Theorem~\ref{thm:cost_2broadcast_qudit} and Theorem~\ref{thm:bound on n-broadcasting} reveal that engaging in virtual $n$-broadcasting not only enables the acquisition of correlation of a bipartite quantum system encoded in the expectation values but also grants control over the associated costs. These remind us that it is feasible to employ non-physical operations to overcome the limitations of quantum mechanics at an acceptable cost.

%%%%%%%%%%%%%%%%%%%%%%%%%%%%%%%%%%%%%%%%%%%%%%%%%%%%%%%%%%%%%%%%%%%%%%%%%%%
\section{Concluding remarks}
In this work, we have proposed a novel framework known as \textit{unilocal virtual quantum broadcasting}, employing Hermitian-preserving trace-preserving (HPTP) maps. We have demonstrated the existence of a universal unilocal virtual $n$-broadcasting protocol capable of distributing information from any bipartite quantum state to multiple parties via local operations. Furthermore, we have formalized the simulation cost of this broadcasting protocol as a semidefinite programming problem. Notably, we have provided an analytical universal unilocal virtual 2-broadcasting protocol to clarify the optimal simulation cost. By accurately characterizing simulation cost, we found that virtual 2-broadcasting remains applicable in high-dimensional quantum systems, as the corresponding simulation cost converges to a constant $\log 3$ with increasing dimensions. Furthermore, we have provided upper and lower bounds on the simulation cost of the virtual $n$-broadcasting protocol and demonstrated that the lower bound converges to the upper bound $\log(2n-1)$ that is independent of the system dimension. The findings above demonstrate the practical potential of our virtual broadcasting protocol, as the simulation costs are always controllable. It is worth noting that Parzygnat et al.~\cite{parzygnat2024virtual} have explored broadcasting tasks via a virtual process. They focused on the conditions that virtual quantum broadcasting maps should fulfill and provided physical interpretations for their canonical quantum broadcasting map from multiple perspectives. Our work generalizes the virtual broadcasting~\cite{parzygnat2024virtual} to virtual unilocal broadcasting by allowing a reference system
and shows that unilocal virtual broadcasting maps can efficiently accomplish the broadcasting task through simulated physical operations with minimal cost. It is noteworthy that if system $A$ is trivial, the canonical virtual broadcasting map presented in Ref.~\cite{parzygnat2024virtual} is also a feasible 2-broadcasting protocol but not the one with the minimum simulation cost. Specifically, one can check that the simulation cost of the canonical virtual broadcasting map is $\log d$.

Our results open new avenues for understanding and harnessing the unique properties of quantum mechanics. This demonstrates the possibility of overcoming the limitations of quantum mechanics using controllable non-physical operations. The exploration of virtual broadcasting not only broadens our comprehension of quantum information distribution~\cite{murao2000quantum,chuan2012quantum,streltsov2012quantum} but also provides a valuable tool for advancing quantum communication and computing technologies~\cite{bennett1993teleporting}. Future work will focus on the implementation of quantum circuits of our proposed virtual broadcasting protocol and its further practical applications in the areas of quantum communication and computing.

\vspace{2mm}
\textit{Note added.}
While finishing the arXiv preprint version of this work in October 2023, we became aware of the arXiv preprint of a closely related work~\cite{parzygnat2024virtual} that independently proposed the idea of {virtual broadcasting} in October 2023. The main distinction is discussed in the above concluding remarks.

\section*{Acknowledgements}
We would like to thank Ranyiliu Chen and Xuanqiang Zhao for their helpful comments. We also thank the anonymous reviewers for their helpful suggestions, which helped us improve the manuscript. This work was partially supported by the National Key R\&D Program of China (Grant No. 2024YFE0102500), the Guangdong Provincial Quantum Science Strategic Initiative (Grant No. GDZX2303007), the Guangdong Provincial Key Lab of Integrated Communication, Sensing and Computation for Ubiquitous Internet of Things (Grant No. 2023B1212010007), the Start-up Fund (Grant No. G0101000151) from HKUST (Guangzhou), and the Education Bureau of Guangzhou Municipality.

%%%%%%%%%%%%%%%%%%%%%%%%%%%%%%%%%%%%%%%%%%%%%%%%%%%%%%
% \bibliographystyle{ieeetr}
\bibliography{main}

%%%%%%%%% SUPPLEMENTAL MATERIAL %%%%%%%%%
\appendix
\setcounter{subsection}{0}
\setcounter{table}{0}
\setcounter{figure}{0}

\vspace{3cm}
% \onecolumngrid
% \vspace{2cm}

\begin{center}
\large{\textbf{Appendix for} \\ \textbf{
Optimal unilocal virtual quantum broadcasting}}
\end{center}

\renewcommand{\theequation}{A\arabic{equation}}
% \numberwithin{equation}{section}
% \renewcommand{\thesubsection}{\normalsize{Supplementary Note \arabic{subsection}}}
\renewcommand{\theproposition}{A\arabic{proposition}}
\renewcommand{\thedefinition}{A\arabic{definition}}
\renewcommand{\thefigure}{A\arabic{figure}}
\setcounter{equation}{0}
\setcounter{table}{0}
\setcounter{section}{0}
\setcounter{proposition}{0}
\setcounter{definition}{0}
\setcounter{figure}{0}

%%%%%%%%%%%%%%%%%%%%%%%%%%%%%%%%%%%%%%%%%%%%%%%%%%%%%%%%%%%%%%%%%%%%%%%%%%%%%%%%%%%%%%%%
\section{The proof of Lemma~\ref{lem:state to channel} and Proposition~\ref{prop:UBP}}

\renewcommand\theproposition{\ref{lem:state to channel}}
\setcounter{proposition}{\arabic{proposition}-1}
\begin{lemma}\label{appendix:choi_feature_for_broadcasting_map}
An HPTP map $\Gamma_{B\rightarrow B^n}$ is a universal unilocal virtual $n$-broadcasting protocol if and only if 
\begin{align}
    J_{BB_j}^{\Gamma}=\Phi_{BB_j}, \quad j=1,\cdots, n,
\end{align}
where $\Phi_{BB_j}$ denotes the unnormalized $d\otimes d$ maximally entangled state on system $BB_j$, $J^{\Gamma}_{BB_j} := \tr_{\setminus BB_j}[J^\Gamma_{BB^n}]$, and $J^\Gamma_{BB^n}$ is the Choi operator of $\Gamma_{B\rightarrow B^n}$.
\end{lemma}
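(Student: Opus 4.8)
The plan is to reduce the defining condition for a universal protocol — a statement quantified over all bipartite inputs $\rho_{AB}$ — to a statement about the $n$ marginal maps $\Gamma^{(j)}_{B\to B_j} := \tr_{\setminus B_j}\circ\,\Gamma_{B\to B^n}$, and then rephrase that in terms of Choi operators via the Choi--Jamio\l{}kowski isomorphism.

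First I would unfold the definition. Since the identity map on $A$ is suppressed, the object acting on $\rho_{AB}$ is $\id_A\ox\Gamma_{B\to B^n}$, and the partial trace $\tr_{\setminus AB_j}$ is taken only over the output systems $B_1,\dots,B_n$ other than $B_j$ — it touches nothing on $A$. It therefore commutes with $\id_A$, so
\[
\tr_{\setminus AB_j}\big[(\id_A\ox\Gamma_{B\to B^n})(\rho_{AB})\big]
  =(\id_A\ox\Gamma^{(j)}_{B\to B_j})(\rho_{AB}).
\]
Hence $\Gamma_{B\to B^n}$ is a universal unilocal virtual $n$-broadcasting protocol if and only if $(\id_A\ox\Gamma^{(j)}_{B\to B_j})(\rho_{AB})=\rho_{AB}$ for every $\rho_{AB}\in\cD(\cH_A\ox\cH_B)$ and every $j$. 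Testing this on product states $\rho_A\ox\rho_B$ forces $\Gamma^{(j)}(\rho_B)=\rho_B$ for all $\rho_B$, and since density operators span $\cL(\cH_B)$ this is equivalent (by linearity, and trivially in the converse direction) to $\Gamma^{(j)}$ being the identity map $\id\colon\cL(\cH_B)\to\cL(\cH_{B_j})$ under the identification $\cH_{B_j}\cong\cH_B$ — for every $j$.

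It then remains to translate ``$\Gamma^{(j)}=\id$ for all $j$'' into Choi form. By linearity of the partial trace and the definition of the Choi operator,
\[
\tr_{\setminus BB_j}[J^\Gamma_{BB^n}]
 =\sum_{i,k}\ketbra{i}{k}_B\ox\tr_{\setminus B_j}\big[\Gamma_{B\to B^n}(\ketbra{i}{k})\big]
 =J^{\Gamma^{(j)}}_{BB_j},
\]
so the operator written $J^\Gamma_{BB_j}$ in the statement is exactly the Choi operator of $\Gamma^{(j)}$; meanwhile $\Phi_{BB_j}=\sum_{i,k}\ketbra{ii}{kk}$ is exactly the Choi operator of $\id$. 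Since the Choi--Jamio\l{}kowski map is a bijection between linear maps $\cL(\cH_B)\to\cL(\cH_{B_j})$ and operators on $\cH_B\ox\cH_{B_j}$, we obtain $J^\Gamma_{BB_j}=\Phi_{BB_j}\iff\Gamma^{(j)}=\id$ for each $j$, and chaining the equivalences finishes the proof.

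I do not anticipate a real obstacle here: the content is linear-algebraic bookkeeping. The one point that wants care is the identification $\cH_{B_j}\cong\cH_B$ implicit in the equation ``$\rho_{AB}=\tr_{\setminus AB_j}[\cdots]$'', together with keeping straight which copy of $\cH_B$ plays the role of the Choi reference and which is an output; I would fix computational bases at the outset to pin this down. It is also worth remarking that the hypothesis ``HPTP'' (rather than CPTP) is never used — the equivalence lives entirely at the level of linear maps and their Choi operators.
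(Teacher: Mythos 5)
Your proof is correct and follows essentially the same route as the paper's: both directions come down to identifying the reduced Choi operator $\tr_{\setminus BB_j}[J^\Gamma_{BB^n}]$ with the Choi operator of the marginal map $\tr_{\setminus B_j}\circ\,\Gamma$ and invoking the injectivity of the Choi--Jamio\l{}kowski correspondence to equate ``marginal map is the identity'' with ``marginal Choi operator is $\Phi_{BB_j}$.'' The only cosmetic difference is that the paper phrases the argument through the evaluation formula $\tr_B[\rho_{AB}^{T_B}\,J^\Gamma_{BB_j}]=\rho_{AB}$, whereas you argue at the level of the maps themselves and make the product-state/spanning step explicit; your closing observation that Hermiticity and trace preservation are never used is also accurate.
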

\begin{proof}
Considering the `if part', for $\forall j\in\{1,\cdots,n\}$, $\tr_{\setminus BB_j}[J_{BB^n}^{\Gamma}]=\Phi_{BB_j}$ implies that
\begin{align}
\tr_{B}[\rho_{AB}^{T_{B}}\tr_{\setminus BB_j}[J_{BB^n}^{\Gamma}]]=\rho_{AB},
\end{align}
for all states $\rho_{AB}\in \cD(\cH_A\ox\cH_B)$.
For the `only if' part, we assume $\Gamma_{B\to B^n}$ is a universal unilocal virtual $n$-broadcasting protocol. Then for any input state $\rho_{AB}\in\cD(\cH_A\ox \cH_B)$, we have 
\begin{align}
\rho_{AB}=\tr_{\setminus AB_j}[\rho_{AB}^{T_{B}}J_{BB^n}^{\Gamma}]=\tr_{B}[\rho_{AB}^{T_{B}}\tr_{\setminus BB_j}[J_{BB^n}^{\Gamma}]],
\end{align}
which means $\tr_{\setminus BB_j}[J_{BB^n}^{\Gamma}]$ is a Choi operator of the identity operator from $B$ to $B_j$, i.e., $\tr_{\setminus BB_j}[J_{BB^n}^{\Gamma}]=\Phi_{BB_j}$. Thus, we complete the proof.
\end{proof}

\renewcommand\theproposition{\ref{prop:UBP}}
\setcounter{proposition}{\arabic{proposition}-1}
\begin{proposition}
For any bipartite quantum system $AB$, there exists a universal unilocal virtual $n$-broadcasting protocol.
\end{proposition}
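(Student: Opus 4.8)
The plan is to exhibit an explicit HPTP map $\Gamma^\prime_{B\to B^n}$ and verify, via Lemma~\ref{lem:state to channel}, that it is a universal unilocal virtual $n$-broadcasting protocol; since the excerpt already displays the candidate
\[
\Gamma^\prime_{B\rightarrow B^n}(\rho_{AB}) = \sum_{j=1}^n \cS_{B_1B_j}\!\left(\rho_{AB_1}\ox \tfrac{I_{B_2\cdots B_n}}{d^{n-1}}\right) - (n-1)\,\cR_{B\rightarrow B^n}(\rho_{AB}),
\]
the task reduces to two checks: that $\Gamma^\prime$ is HPTP, and that $\tr_{\setminus BB_j}[J^{\Gamma^\prime}_{BB^n}] = \Phi_{BB_j}$ for every $j=1,\dots,n$.

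First I would write down the Choi operator of $\Gamma^\prime$. Each summand $\cS_{B_1B_j}(\rho_{AB_1}\ox I_{B_2\cdots B_n}/d^{n-1})$ has Choi operator $\tfrac{1}{d^{n-1}}\cS_{B_1B_j}(\Phi_{BB_1}\ox I_{B_2\cdots B_n})$ — equivalently $\tfrac{1}{d^{n-1}}\Phi_{BB_j}\ox I_{\setminus BB_j}$ — and the replacement channel $\cR_{B\to B^n}$ producing $\Phi_{B_1B_2}\ox I_{B_3\cdots B_n}/d^{n-3}$ has Choi operator $I_B\ox \Phi_{B_1B_2}\ox I_{B_3\cdots B_n}/d^{n-3}$ (appropriately normalized so that each constituent is genuinely trace-preserving). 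Hermiticity preservation is immediate since swaps, partial-trace-and-tensor-with-identity, and replacement maps are all Hermitian-preserving, and a real linear combination of such maps is Hermitian-preserving. Trace preservation follows because each of the $n$ swap terms is trace-preserving, the replacement term is trace-preserving, and the coefficients $1,\dots,1,-(n-1)$ sum to $n - (n-1) = 1$; hence $\Gamma^\prime$ is HPTP.

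Next I would verify the Choi marginal condition. Fix $j$ and compute $\tr_{\setminus BB_j}$ of the Choi operator. In the $k$-th swap term ($k \ne j$), tracing out everything except $BB_j$ leaves $\tr[\rho]$-type factors contributing $\tfrac{1}{d^{n-1}}\cdot d^{n-2}\cdot\Phi_{BB_k}\!\!\restriction$ which traces to $I_{B}\ox I_{B_j}$-weighted pieces — more carefully, the term $\tfrac{1}{d^{n-1}}\Phi_{BB_k}\ox I_{\setminus BB_k}$ has $\tr_{\setminus BB_j}$ equal to $\tfrac{1}{d^{n-1}}\cdot d\cdot d^{n-3}\cdot I_B\ox I_{B_j}$ when $k\ne j$ (one factor $d$ from $\tr\Phi_{BB_k}=d$ after tracing $B_k$, here one has to track which of $B$, $B_k$ survive), and equals $\Phi_{BB_j}$ when $k=j$. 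Summing the $n-1$ terms with $k\ne j$ gives a multiple of $I_{BB_j}$, the $k=j$ term gives $\Phi_{BB_j}$, and the replacement term's marginal is also a multiple of $I_{BB_j}$ (since $\Phi_{B_1B_2}$ is symmetric, tracing down to $BB_j$ produces identity). The point of the normalization of $\cR$ is precisely that these two identity contributions cancel with the correct coefficient $-(n-1)$, leaving exactly $\Phi_{BB_j}$. By Lemma~\ref{lem:state to channel}, $\Gamma^\prime$ is then a universal unilocal virtual $n$-broadcasting protocol. The $n=2$ case in the main text is the template.

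The main obstacle is purely bookkeeping: keeping the powers of $d$ and the identity-vs-maximally-entangled factors straight across the partial traces, so that the replacement channel $\cR$ is normalized to be genuinely trace-preserving \emph{and} its marginal exactly cancels the spurious identity terms from the $n-1$ off-diagonal swap contributions. I would handle this by computing the single-term marginals $\tr_{\setminus BB_j}$ carefully once, in full generality, and then assembling the sum; there is no conceptual difficulty beyond this, since Hermiticity- and trace-preservation are structurally automatic and the marginal identity is exactly the content of Lemma~\ref{lem:state to channel}.
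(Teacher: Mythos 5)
Your proposal is correct and follows essentially the same route as the paper: exhibit the explicit map $\Gamma^\prime_{B\to B^n}$, write down its Choi operator, verify that each marginal $\tr_{\setminus BB_j}[J^{\Gamma^\prime}_{BB^n}]$ equals $\Phi_{BB_j}$ (the $k=j$ swap term gives $\Phi_{BB_j}$ while the $n-1$ off-diagonal terms each contribute $\tfrac{1}{d}I_{BB_j}$, cancelled exactly by the replacement term), and invoke Lemma~\ref{lem:state to channel}. The only difference is that you additionally spell out the HPTP check, which the paper leaves implicit.
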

\begin{proof}
The universal unilocal virtual $n$-broadcasting protocol $\Gamma^\prime_{B\rightarrow B^n}$ can be written as
\begin{equation}
\begin{aligned}
    \Gamma^\prime_{B\rightarrow B^n}(\rho_{AB}):=\sum_{j=1}^n &\cS_{B_1B_j} (\rho_{AB_1}\ox \frac{I_{B_2\cdots B_n}}{d^{n-1}})\\
    &-(n-1)\cR_{B\rightarrow B^n}(\rho_{AB}),
\end{aligned}
\end{equation}
where $\mathcal{S}_{B_1B_j}(\cdot)$ denotes the swap operation between the subsystems $B_1$ and $B_j$, and $\cR_{B\rightarrow B^n}(\cdot)$ denotes the replacement channel yielding $\Phi_{B_1B_2}\ox \frac{I_{B_3\cdots B_n}}{d^{n-3}}$ for any input state.

Then the Choi operator of $\Gamma^\prime_{B\rightarrow B^n}$ is 
\begin{equation}
\begin{aligned}
    J_{BB^n}^{\Gamma^\prime}:=\frac{1}{d^{n-1}}&\cS_{B_1B_j}(\sum_{j=1}^n \Phi_{BB_1}\ox I_{B_2\cdots B_n})\\
    &-\frac{n-1}{d^{n-1}} \Phi_{B_1B_2}\ox I_{BB_3\cdots B_n},\label{eq:universal choi}
\end{aligned} 
\end{equation}
Then, it is straightforward to check that
\begin{align}
    \tr_{\setminus BB_j}[J_{BB^n}^{\Gamma^\prime}]=\Phi_{BB_j},\quad j=1,\cdots,n
\end{align}
Hence we conclude that the HPTP map with Choi operator $J_{BB^n}^{\Gamma^\prime}$ achieves $n$-broadcasting for all states $\rho_{AB}$ by Lemma~\ref{appendix:choi_feature_for_broadcasting_map}.
\end{proof}

%%%%%%%%%%%%%%%%%%%%%%%%%%%%%%%%%%%%%%%%%%%%%%%%%%%%%%%%%%%%%%%%%%%%%%%%%%%%
%%%%%%%%%%%%%%%%%%%%%%%%%%%%%%%%%%%%%%%%%%%%%%%%%%%%%%%%%%%%%%%%%%%%%%%%%%%%
\section{SDP for unilocal virtual broadcasting protocol}

%%%%%%%%%%%%%%%%%%%%%%%%%%%%%%%%%%%%%%%%%%%%%%%%%%%%%%%%%%%%%%%%%%%%%%%%%%%%
\label{appendix:universal_dual_sdp}
\renewcommand\theproposition{\ref{prop:SDP-universal}}
\setcounter{proposition}{\arabic{proposition}-1}

\begin{proposition}
The optimal simulation cost of all universal unilocal virtual $n$-broadcasting protocols can be characterized as the following SDP:
\begin{equation}\label{appendix:primal_without_rho}
\begin{aligned}
    2^{\gamma^{\ast}_n} =  \min\;&p_1+p_2\\
    {\rm s.t.}\; &\tr_{\backslash BB_j}[J^{\cN_{1}}_{BB^n}-J^{\cN_{2}}_{BB^n}]=\Phi_{BB_j},\,j=1,\cdots,n\\
    &\tr_{B^n}[J^{\cN_{1}}_{BB^n}]=p_1I_{B},\\
    &\tr_{B^n}[J^{\cN_{2}}_{BB^n}]=p_2I_{B},\\
    &J^{\cN_{1}}_{BB^n}\geq 0, J^{\cN_{2}}_{BB^n}\geq 0,
\end{aligned}
\end{equation}
with variables $J^{\cN_{1}}_{BB^n}$, $J^{\cN_{2}}_{BB^n}$, $p_1$ and $p_2$. $\Phi_{BB_j}$ is the unnormalized $d\otimes d$ maximally entangled state on system $BB_j$.
\end{proposition}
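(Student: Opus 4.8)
The plan is to read off the primal SDP directly by unfolding Definition~\ref{def:optimal simulation cost} and Definition~\ref{def:cost_of_hptp} in Choi-operator language and using Lemma~\ref{lem:state to channel} to eliminate the broadcasting requirement, and then to obtain the companion dual SDP by Lagrangian duality. Concretely, I would start from the identity $2^{\gamma^{\ast}_n}=\min\{\,p_1+p_2 : \Gamma_{B\to B^n}=p_1\cN_1-p_2\cN_2,\ \Gamma_{B\to B^n}\in\cT_n,\ p_1,p_2\geq 0,\ \cN_1,\cN_2\in\CPTP\,\}$, obtained by composing the two definitions. Passing to Choi operators gives $J^{\Gamma}_{BB^n}=p_1\hat J^{\cN_1}_{BB^n}-p_2\hat J^{\cN_2}_{BB^n}$, and I would substitute $J^{\cN_i}_{BB^n}:=p_i\hat J^{\cN_i}_{BB^n}$. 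The key observation is that a pair $(J^{\cN_i}_{BB^n},p_i)$ arises this way precisely when $J^{\cN_i}_{BB^n}\geq 0$ and $\tr_{B^n}[J^{\cN_i}_{BB^n}]=p_i I_B$: if $p_i>0$ then $J^{\cN_i}_{BB^n}/p_i$ is a legitimate channel Choi operator, while $p_i=0$ forces $J^{\cN_i}_{BB^n}=0$ and corresponds to an irrelevant choice of $\cN_i$. Lemma~\ref{lem:state to channel} turns $\Gamma\in\cT_n$ into $\tr_{\setminus BB_j}[J^{\cN_1}_{BB^n}-J^{\cN_2}_{BB^n}]=\Phi_{BB_j}$ for $j=1,\dots,n$, and collecting everything produces exactly the SDP in Eq.~\eqref{appendix:primal_without_rho}; since its objective and feasible set coincide with those of the original minimization, its value equals $2^{\gamma^{\ast}_n}$.

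For the dual I would form the Lagrangian, attaching Hermitian multipliers $X_{BB_j}$ to the $n$ broadcasting equalities, Hermitian multipliers $Z_B$ and $K_B$ (with the sign convention chosen so that the resulting inequalities come out as in Eq.~\eqref{eq:dual_without_rho}) to the two scaling equalities, and positive semidefinite multipliers $U_1,U_2$ to $J^{\cN_1}_{BB^n},J^{\cN_2}_{BB^n}\geq 0$. Using that the adjoint of $\tr_{\setminus BB_j}$ is $Y\mapsto Y\ox I_{B^n\setminus B_j}$ and that of $\tr_{B^n}$ is $Y\mapsto Y\ox I_{B^n}$, minimizing the Lagrangian over $p_1,p_2\geq 0$ yields $\tr[Z_B]\leq 1$ and $\tr[K_B]\leq 1$, while minimizing over Hermitian $J^{\cN_1}_{BB^n},J^{\cN_2}_{BB^n}$ forces $U_1=Z_B\ox I_{B^n}-\sum_{j}X_{BB_j}\ox I_{B^n\setminus B_j}\geq 0$ and $U_2=K_B\ox I_{B^n}+\sum_{j}X_{BB_j}\ox I_{B^n\setminus B_j}\geq 0$, with the leftover term $\sum_{j}\tr[X_{BB_j}\Phi_{BB_j}]$ becoming the dual objective; here $\cS_{B_1B_j}(X_{BB_1}\ox I_{B_2\cdots B_n})$ is just a uniform notation for the operator acting as $X$ on $BB_j$ and as the identity elsewhere. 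To close the loop I would check Slater's condition: from any feasible point (one exists by Proposition~\ref{prop:UBP}), replacing $J^{\cN_1}_{BB^n},J^{\cN_2}_{BB^n}$ by $J^{\cN_1}_{BB^n}+tI,\ J^{\cN_2}_{BB^n}+tI$ and $p_1,p_2$ by $p_1+td^n,\ p_2+td^n$ gives a strictly positive feasible point for every $t>0$, so strong duality holds, both optima are attained, and the dual SDP likewise equals $2^{\gamma^{\ast}_n}$.

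The primal translation is routine bookkeeping. The delicate part is the dual derivation: one must pick the signs of the multipliers on the scaling constraints so that the nonnegativity $p_1,p_2\geq 0$ becomes $\tr[Z_B]\leq 1$, $\tr[K_B]\leq 1$ rather than equalities, and one must use the partial-trace adjoints correctly when gathering the coefficients of $J^{\cN_1}_{BB^n}$ and $J^{\cN_2}_{BB^n}$. Confirming that the $tI$-shift preserves feasibility is the small extra step needed for strong duality.
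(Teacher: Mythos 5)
Your proof is correct and follows essentially the same route as the paper: reparametrize the decomposition $\Gamma = p_1\cN_1 - p_2\cN_2$ via the scaled Choi operators $J^{\cN_i} := p_i\hat J^{\cN_i}$, encode CPTP-ness as $J^{\cN_i}\geq 0$ and $\tr_{B^n}[J^{\cN_i}] = p_i I_B$, replace membership in $\cT_n$ by the marginal conditions of Lemma~\ref{lem:state to channel}, and obtain the dual by the same Lagrangian computation as in Appendix~\ref{appendix:universal_dual_sdp}. You are in fact slightly more careful than the paper in verifying that the reparametrization is lossless (the $p_i=0$ case) and in exhibiting an explicit strictly feasible point for Slater's condition.
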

Now we derive its dual SDP for the case of 2-broadcasting. Based on the primal SDP, the Lagrange function can be written as
\begin{align}
    \mathcal{L}&(X_{BB_1}, Y_{BB_2},Z_{B}, K_{B},J^{\cN_1},J^{\cN_2},p_1,p_2)\\
    &:= p_1+p_2 + \langle X_{BB_1},\,\Phi_{BB_1}-\tr_{B_2}[J^{\cN_{1}}-J^{\cN_{2}}]\rangle\nonumber\\
    &\quad+\langle Y_{BB_2},\,\Phi_{BB_2}-\tr_{B_1}[J^{\cN_{1}}-J^{\cN_{2}}]\rangle\nonumber\\
    &\quad+\langle Z_{B},\, p_1I_{B}-\tr_{B_1B_2}[J^{\cN_1}]\rangle\nonumber\\
    &\quad+\langle K_{B},\, p_2I_{B}-\tr_{B_1B_2}[J^{\cN_2}]\rangle\\
    &=\tr[X_{BB_1}\Phi_{BB_1}]+\tr[Y_{BB_2}\Phi_{BB_2}] \nonumber\\
    &\quad+p_1(\tr[Z_B]+1)+p_2(\tr[K_B]+1)\nonumber\\
    &\quad+\langle -Z_B\ox I_{B_1B_2},\, J^{{\cN}_1}\rangle +\langle -K_B\ox I_{B_1B_2},\, J^{{\cN}_2}\rangle\nonumber\\
    &\quad+ \langle -X_{BB_1}\ox I_{B_2}, \, J^{{\cN}_1}-J^{{\cN}_2}\rangle\nonumber\\
    &\quad+ \langle -Y_{BB_2}\ox I_{B_1}, \, J^{{\cN}_1}-J^{{\cN}_2}\rangle,
\end{align} 
where $X_{BB_1}$, $Y_{BB_2}$, $Z_{B}$ and $K_{B}$ are Lagrange multipliers. Then, the Lagrange dual function can be written as
\begin{equation}
\begin{aligned}
     \mathcal{G}&(X_{BB_1}, Y_{BB_2},Z_{B},K_{B})\\
     &:=\inf_{J^{{\cN}_1}\geq0,J^{{\cN}_2}\geq 0,p_1,p_2}\mathcal{L}(p_1,p_2,X_{BB_1}, Y_{BB_2},\\
     &\qquad\qquad\qquad\qquad\qquad Z_{B},K_{B},J^{{\cN}_1},J^{{\cN}_2}).
\end{aligned}
\end{equation}
Since $J^{{\cN}_1}\geq0$ and $J^{{\cN}_2}\geq0$, it holds that $\tr [Z_B]\geq -1$, $\tr [K_B] \geq -1$,
\begin{align*}
    &-Z_B\ox I_{B_1B_2} - (X_{BB_1}\ox I_{B_2}+Y_{BB_2}\ox I_{B_1}) \geq 0,\\
    &-K_B\ox I_{B_1B_2} + (X_{BB_1}\ox I_{B_2}+Y_{BB_2}\ox I_{BB_1}) \geq 0,
\end{align*}
otherwise, the inner norm is unbounded. Redefine $Z_B$ as $-Z_B$ and $K_B$ as $-K_B$. Then, we obtain the following dual SDP.
\begin{equation}
\begin{aligned}\label{appendix:dual_without_rho_2broadcast}
\max& \;\;\tr[X_{BB_1}\Phi_{BB_1}]+\tr[Y_{BB_2}\Phi_{BB_2}]\\
    {\rm s.t.} &\;\;\tr [Z_B]\leq 1,\\
            & \;\;\tr [K_B]\leq 1,\\
            & \;\;Z_B\ox I_{B_1B_2} - X_{BB_1}\ox I_{B_2} - Y_{BB_2}\ox I_{B_1} \geq 0,\\
            & \;\;K_B\ox I_{B_1B_2} + X_{BB_1}\ox I_{B_2} + Y_{BB_2}\ox I_{B_1} \geq 0.
\end{aligned}
\end{equation}
It is worth noting that the strong duality is held by Slater's condition. Similarly, it is straightforward to generalize it to the case of $n$-broadcasting shown in Eq.~\eqref{eq:dual_without_rho}.

\end{document}